\newtheorem{lemma}{Lemma}
\newtheorem{conjecture}{Conjecture}
\newtheorem{theorem}{Theorem}
\newtheorem{definition}{Definition}
\newtheorem{corollary}{Corollary}
\newtheorem{fact}{Fact}
\title{Independent Sets in Vertex-Arrival Streams}
\author[1]{Graham Cormode}
\author[1]{Jacques Dark}
\author[2]{Christian Konrad}
\affil[1]{Department of Computer Science, University of Warwick, Coventry, UK\\ \texttt{\{g.cormode,j.dark\}@warwick.ac.uk}}
\affil[2]{Department of Computer Science, University of Bristol, Bristol, UK\\ \texttt{christian.konrad@bristol.ac.uk}}
\date{}
\begin{document}

\maketitle

\begin{abstract}
    We consider the classic maximal and maximum independent set problems in
    three models of graph streams:

    In the edge-arrival model we see a stream of edges which
    collectively define a graph; this model has been well-studied for
    a variety of problems. We first show that the space complexity for
    a one-pass streaming algorithm to find a maximal independent set
    is quadratic (i.e. we must store all edges). We further
    show that the problem does not become much easier if we only
    require approximate maximality. This contrasts strongly with the
    other two vertex-based models, where one can greedily find an exact solution using only the space needed to store the independent set.

    In the ``explicit'' vertex stream model, the input stream is a
    sequence of vertices making up the graph, where every vertex
    arrives along with its incident edges that connect to previously
    arrived vertices. Various graph problems require substantially
    less space to solve in this setting than for edge-arrival streams. We
    show that every one-pass c-approximation streaming algorithm for
    maximum independent set (MIS) on explicit vertex streams requires space $\Omega(\frac{n^2}{c^7})$, 
    where $n$ is the number of vertices of the input graph, and it is
    already known that space $\tilde{\Theta}(\frac{n^2}{c^2})$ 
    is necessary and sufficient in the edge arrival model (Halldórsson et al. 2012). The MIS problem is thus not significantly 
    easier to solve under the explicit vertex arrival order assumption. Our result is proved via a reduction to a new 
    multi-party communication problem closely related to pointer jumping.
    
    In the ``implicit'' vertex stream model, the input stream consists of a sequence of objects, one per vertex. 
    The algorithm is equipped with a function that can map a pair of objects to the presence or absence of an edge, 
    thus defining the graph. This model captures, for example, geometric intersection graphs such as unit disc graphs. 
    Our final set of results consists of several improved upper and lower bounds for ball intersection graphs, in both 
    explicit and implicit streams.

\end{abstract}
\thispagestyle{empty}
\addtocounter{page}{-1}
\clearpage
\section{Introduction}

The streaming model supposes that, rather than being loaded into memory all at once, the input is received piece-by-piece over a period of time. Only a sublinear amount of memory (in the input size) is made available, preventing any algorithm from ``seeing'' even a constant fraction of the whole input at once.

In graph streams (see \cite{m14} for an excellent survey), we distinguish between the ``edge-arrival'' model, where the stream consists of individual edges arriving in any order, and the ``vertex-arrival'' model, where the stream consists of batches of edges incident to a particular vertex---as each vertex ``arrives'' we are given all the edges from the new vertex to previously arrived vertices. We will shorten the names to edge streams and vertex streams, respectively. Problems on edge streams are always at least as hard as on vertex streams (as any vertex stream is also a valid edge stream).

There is a further variant which we will call ``implicit'' vertex streams (as opposed to the normal explicit representation). In this model, the stream consists of a series of small ($\text{polylog}(n)$-sized) identifiers, one per vertex. We are additionally provided with some function or oracle which maps a pair of identifiers to a Boolean output indicating whether the two vertices are connected or not. This implicitly defines a graph over the list of identifiers received.
Geometric intersection graphs, received as a stream of geometric objects, are the most natural member of this class. For example, a unit interval intersection graph might be received as a series of points in $\mathbb{R}$. Then a pair of vertices $x, y$ are adjacent if and only if $|x - y| \leq 1$.

Explicit and implicit vertex streams are closely related but distinct,
with neither being strictly ``harder'' than the other. For example: it
is easy to count exactly the number of edges in $O(1)$ space (of
words) for an explicit vertex stream, however, doing so for an
implicit stream requires linear space, otherwise we cannot hope to
know how many edges are incident to the final vertex.
On the other hand: implicit vertex streams can be stored entirely in $\tilde{O}(n)$ space\footnote{We use $\tilde{O}$, $\tilde{\Theta}$, and $\tilde{\Omega}$ to mean $O$, $\Theta$, and $\Omega$ (respectively) with log factors suppressed.}, whereas explicit vertex streams require $\Omega(n^2)$ space to store the full structure.

Maximum independent set is an important problem on graphs. The task is to find a largest subset of vertices which have no edges between them. Unfortunately, the offline problem is NP-hard to find a maximum set in a general graph \cite{k72}, and even hard to approximate within a factor of $n^{1-\epsilon}$, for any $\epsilon > 0$ \cite{z07}. It is also known to be hard (requiring $\tilde{\Theta}(\frac{n^2}{c^2})$ space to $c$-approximate on an $n$-vertex graph) in the edge streaming model, despite being allowed unlimited computation \cite{halldorsson2012streaming}. However, we can do much better for graphs of bounded independence, given as vertex streams \cite{Cormode:Dark:Konrad:18}.

In this paper, we study the hardness of approximate maximum independent set in the explicit and implicit vertex streaming model.
First, we motivate the study of vertex streams by showing a large space complexity gap between the edge and (both) vertex models for the related problem of finding a \emph{maximal} independent set. 
The gap remains non-trivial even when we only ask for a set which is ``nearly'' maximal.
Then, we propose a new communication problem (closely related to index and pointer jumping) and use it to show that MIS for general 
graphs cannot be much easier in the explicit vertex streaming model than in the edge streaming model.
Last, we show various improved upper and lower bounds for certain geometric intersection graphs in both vertex streaming models.

\subsection{Our Contributions}
For a graph $G$, let $\alpha(G)$ denote the size of a maximum independent set. Further, let $\chi(G)$ 
denote the {\em chromatic number}, i.e., the number of colors needed in any (legal) coloring of the input graph.
We arrange our contributions according to the three models of
streaming graphs that we consider (defined more formally below):

\begin{figure}[t]
    \begin{center}
        \begin{tabular}{|c||c|c|c|}\hline
             & \textbf{Approx. MIS} & \multicolumn{2}{|c|}{\textbf{Approx.} $\mathbf{\alpha(G)}$} \\
            \textbf{Space Bound} & $\tilde{O}(\alpha(G))$ & $\text{poly}(\log{n})$ & $\Omega(n)$ \\\hline\hline
            Unit Interval & $2$ (Greedy alg.) & $O\left(\frac{\log^2{n}}{\log\log{n}} \right)$ \cite{Cormode:Dark:Konrad:18} & \cellcolor{lightgray} $< 5/3$ \\\hline
        \end{tabular}
    \end{center}
    \caption{Approximation factors for Explicit Vertex Streams.}\label{fig:explicit-table}
\end{figure}

\begin{itemize}

\item
\textbf{Edge streams.}
We show that solving maximal independent set in an edge stream requires $\Omega(n^2)$ space, significantly more than the $\tilde{O}(\alpha(G))$ sufficient to solve it on an explicit or implicit vertex stream. We further show that even finding any independent set which covers all but an $n^{-\epsilon}$ fraction of the vertices for $\epsilon > 0$ (a nearly maximal independent set) in an edge stream requires $n^{2-o(1)}$ space, while covering just a constant fraction of the vertices requires $n^{1+\Omega(\frac{1}{\log\log{n}})}$ space.
The techniques rely on reductions from problems in communication
complexity --- initially, a relatively simple reduction from
\textsc{Index} to show the hardness of MIS, but a more involved
reduction to a graph-based variant of \textsc{Index} based on
Ruzsa-Szemer\'{e}di graphs for the approximate relaxation.

\item
\textbf{Explicit vertex streams.}
For general explicit vertex streams, we show that $c$-approximating
$\alpha(G)$ requires $\Omega(\frac{n^2}{c^7})$ space. We further observe 
that the construction leads to the same space bound for approximating $\chi(G)$, the chromatic number.
We rely on the communication complexity of a novel multiparty generalization of
the well-known \textsc{Index} problem, which is closely related to
pointer-jumping problems but allows us to more directly show
hardness. 
The hardness of the MIS problem itself then relies on a careful
construction based on erasure codes to bound the size of cliques in
our hard instance graphs.

\item
\textbf{Implicit vertex streams.}
Next, we show several results for geometric intersection graphs. We
can $3$-approximate MIS for a stream of unit squares in the plane
using $\tilde{O}(\alpha(G))$ space, and achieving better than a
$\frac{5}{2}$-approximation to $\alpha(G)$ requires $\Omega(n)$
space. Unit interval intersection graphs given as an explicit vertex
stream require $\Omega(n)$ space to get a better than $\frac{5}{3}$
approximation to $\alpha(G)$, making it harder than the implicit
stream equivalent. Figures~\ref{fig:implicit-table} and \ref{fig:explicit-table} illustrate these results
and put them in context to previously known bounds (see also Section~\ref{sec:prior-work}).
The two-dimensional upper bounds can be viewed as a generalization of
the one-dimensional bounds, with more work to cover the increased
number of cases that occur in 2D.
However, the lower bounds involve an intricate packing argument, to
show that the MIS size can be used to recover encoded information,
which is used in conjunction with our multiparty \textsc{Index}
variant problem to demonstrate approximation hardness.

\end{itemize}
\begin{figure}
    \begin{center}
        \begin{tabular}{|c||c|c|c|}\hline
             & \textbf{Approx. MIS} & \multicolumn{2}{|c|}{\textbf{Approx.} $\mathbf{\alpha(G)}$} \\
            \textbf{Space Bound} & $\tilde{O}(\alpha(G))$ & $\text{poly}(\log{n}, \epsilon^{-1})$ & $\Omega(n)$ \\\hline\hline
            Unit Interval & $3/2$ \cite{emek2016space}  & $3/2 + \epsilon$ \cite{cabello2017interval} & $< 3/2$ \cite{emek2016space} \\
            Interval & $2$ \cite{emek2016space} & $2 + \epsilon$ \cite{cabello2017interval} & $< 2$ \cite{emek2016space} \\
            Unit Square & \cellcolor{lightgray} $3$ & \cellcolor{lightgray} $3 + \epsilon$ & \cellcolor{lightgray} $< 5/2$ \\\hline
        \end{tabular}
    \end{center}
    \caption{Approximation factors for Implicit Vertex Streams. The first column concerns algorithms that output independent sets themselves, while the 
    second columns concerns algorithms that output estimations of the maximum independent set size. Our results are highlighted in gray.}\label{fig:implicit-table}
\end{figure}


\subsection{Problem Definitions}

Our problems are defined with respect to a graph $G = (V, E)$ with $n$ vertices ($|V| = n$).

\begin{definition}[Independent Sets]\leavevmode
An {\em independent set} of $G$ is a subset of the vertices $U
\subseteq V$ that contains no edges ($(U \times U) \cap E =
\emptyset$). 
We use $\alpha(G)$ to refer to the maximum size of any independent set
of $G$.
An MIS (maximum independent set) of $G$ is any independent set of size $\alpha(G)$. A maximal independent set is an independent set that no other independent sets contain as a proper subset.
\end{definition}

We are also interested in allowing approximations:

\begin{definition}[Approximate MIS]\leavevmode
  A $c$-approximate MIS of $G$ is an independent set of size at least $\frac{\alpha(G)}{c}$.
An algorithm is said to return a $c$-approximation to $\alpha(G)$ if it returns a value $\gamma$ satisfying $\frac{1}{c}\alpha(G) \leq \gamma \leq \alpha(G)$.
\end{definition}

The one-sided nature of the $c$-approximation requirement for
$\alpha(G)$ follows by analogy with the problem of finding a
$c$-approximate MIS, where of course no overestimate is possible.
A two-sided approximation can of course be made one-sided by 
rescaling by $\sqrt{c}$. 


\begin{definition}[Graph Streams]
We define the three different input arrival models as follows: 
\leavevmode
  \begin{itemize}
    \item An \textbf{edge stream} consists of a sequence of $m$ edges $S = \langle e_1, e_2, \cdots, e_m \rangle$ over a pre-determined vertex set $V$, arriving one-by-one in arbitrary order. Then the edge set is given by $E = \lbrace e_i : i \in [m] \rbrace$.

    \item An \textbf{explicit vertex stream} is a sequence of $n$
     vertices $S = \langle v_1, v_2, \cdots, v_n \rangle$
     arriving one-by-one in arbitrary order.
    Each vertex $v_i$ arrives with a list of all the edges between $v_i$
    and previous vertices ($v_j$ with $j < i$).
    Thus, each edge $\{v_i, v_j\}$ is presented exactly once in the
    stream, with the arrival of vertex $v_{\max \{i,j \}}$. 
   \item An \textbf{implicit vertex stream} consists of a sequence of
     $n$ identifiers $S = \langle I_1, I_2, \cdots, I_n \rangle$. Each
     identifier represents a vertex of the graph and is taken from
     some universe $[\mathcal{U}]$ with $\mathcal{U} \in
     O(2^{\operatorname{polylog}(n)})$.
     Thus, the identifiers have a succinct representation of $\operatorname{polylog}(n)$ bits.
        
      We are equipped with oracle access to a symmetric function $\sigma : \mathcal{U}
      \times \mathcal{U} \to \lbrace 0,1 \rbrace$ which determines
      the presence or absence of an edge between a particular pair
      of vertices based on their identifiers.
      So the edge set of the streamed graph is $E_\sigma = \lbrace \lbrace i, j \rbrace : \sigma(I_i, I_j) = 1 \rbrace$.
    \end{itemize}
\end{definition}

Implicit vertex streams can be understood as being defined over a
large pre-defined graph on $O(2^{\operatorname{polylog}(n)})$ vertices, so
that the stream describes the induced sub-graph arising by selecting $n$ of these vertices.

We also mention several special classes of graphs which have natural implicit representations or are easier to approximate MIS on.

\begin{definition}[Bounded Independence Graphs]\leavevmode
    \begin{itemize}
        \item A graph $G$ has \textbf{$f$-bounded independence} for some bounding function $f : \mathbb{N} \to \mathbb{N}$ if every $r$-neighborhood (the set of vertices within distance $r$) of every vertex of $G$ contains no independent set larger than $f(r)$.
        \item A family of graphs $\mathcal{F}$ has \textbf{polynomially bounded independence} if there is a polynomial $P$ such that every graph $G \in \mathcal{F}$ has $P$-bounded independence.
    \end{itemize}
\end{definition}

Bounded-independence graphs admit greedy approximation algorithms.
Consider the greedy algorithm that maintains an independent set
by arbitrarily ordering the vertices, and adding each subsequent
vertex so long as the set stays independent.
For an $f$-bounded independence graph this greedy procedure will give
an $f(1)$-approximation to MIS.
This algorithm can be implemented in any vertex stream (implicit or
explicit) by following the vertex arrival order and using $\tilde{O}(\alpha(G))$ space. 

\begin{definition}[Intersection Graphs]
  The \textbf{geometric intersection graph} $G=(V, E)$ of objects
      drawn from some geometric space $\lbrace X_1, X_2, \cdots, X_n
      \rbrace$ has one vertex associated with each object, and an edge
      between a pair of vertices if and only if the corresponding pair
      of objects intersect: $V = [n]$, $E = \lbrace \lbrace i, j
      \rbrace : X_i \cap X_j \neq \emptyset \rbrace$.
\end{definition}

For example, 
an $\mathbf{l^p_d}$\textbf{-ball graph} is the geometric intersection
graph of $n$ closed $l^p$ balls in $d$ dimensions, each of the form
$\mathcal{B}^p(x, r) = \lbrace y \in \mathbb{R}^d : \|x-y\|_p \leq r
\rbrace \subset \mathbb{R}^d$.
Of particular interest is the 
 \textbf{unit} $\mathbf{l^p_d}$\textbf{-ball graph}, where all the balls have radius $1$.
By specifying $p$ and $d$, we can define intersection graphs on disks,
spheres, squares, rectangles, hyper-cubes, and generalize to any combination of these and other geometric objects.
Many of these classes have polynomially bounded independence (as long as the ratio between the sizes of the smallest ball which can cover any of the objects and the largest ball which can fit inside any object is constant).
These geometric intersection graphs also form natural implicit vertex
streams, assuming that the geometry is discretised to allow $\operatorname{polylog}(n)$ sized representation of the shapes.

\subsection{Prior Work}\label{sec:prior-work}

\paragraph*{Edge Streams}
Halld\'{o}rsson {\em et al.}\ \cite{halldorsson2012streaming} showed that for general graphs in the
edge-arrival model $\Omega \left( \frac{n^2}{c^2\log^2{n}} \right)$
space is required to obtain a $c$-approximation to the maximum
independent set size (or maximum clique size) if $c = \Omega(\log n)$, and 
$\Omega \left( \frac{n^2}{c^4} \right)$ is required if $c =
o(\log{n})$ \cite{halldorsson2012streaming}. 
A corresponding $\tilde{O} \left( \frac{n^2}{c^2} \right)$ space random sampling algorithm shows this is tight up to logarithmic factors.
Braverman {\em et al.} \cite{blsvy18} showed that space $\Omega(\frac{m}{c^2})$ is needed, even if $c = o(\log n)$, where $m$ is the
number of edges of the input graph. This bound though only holds for small values of $m$.

\paragraph*{Explicit Vertex Streams}
The work of Halld\'{o}rsson {\em et al.}\
\cite{halldorsson2010streaming} gives an $O(n \log{n})$ space streaming algorithm which can find an independent set of expected size at least $\beta(G) = \sum_{v \in V} \frac{1}{\text{deg}(v) + 1}$. On general graphs, this only gives a $\Theta(n)$-approximation, but for polynomially bounded independence graphs, this gives a $\text{polylog}(n)$-approximation \cite{hk18}.

In our prior work, it was shown how to return
an estimate $\gamma \in \Omega\left( \frac{\beta(G)}{\log{n}} \right)$
with $\gamma \leq \alpha(G)$ from an explicit vertex arrival stream
using only $O(\log^3{n})$ space~\cite{Cormode:Dark:Konrad:18}. This result for example
gives a $O( \frac{\log^2 n}{\log \log n})$-approximation on unit interval graphs (see Figure~\ref{fig:explicit-table}).
However, the technique samples vertices based on their degree and so does not extend to implicit vertex streams.

Braverman {\em et al.} \cite{blsvy18} showed that in a vertex arrival model, where every vertex arrives together with all its
incident edges (as opposed to the explicit vertex stream model considered here where every vertex arrives together with its incident 
edges connecting to vertices that have previously arrived), space $\Omega(\frac{m}{c^3})$ is
required for computing a $c$-approximate MIS. In their construction the input graph has $\Theta(n c)$ edges, which thus
yields a lower bound of $\Omega(\frac{n}{c^2})$. Observe that our lower bound for explicit vertex streams is $\Omega(\frac{n^2}{c^7})$,
which is a quadratic improvement for constant $c$.

\paragraph*{Implicit Vertex Streams}
In \cite{emek2016space}, it was shown that it is possible to
$\frac{3}{2}$-approximate MIS for the intersection graph of a unit
interval stream using $\tilde{O}(\alpha(G))$ space.
In the same space, a $2$-approximation is possible for arbitrary
interval streams. Both these results are shown to be tight: any
$(\frac{3}{2}-\epsilon)$-approximation for unit intervals---or
$(2-\epsilon)$ for general intervals---requires $\Omega(n)$ space.
By clever use of sampling, the result can be adapted to provide
an approximation of $\alpha(G)$ of $\frac32 + \epsilon$ for unit
intervals and $2 + \epsilon$ for general intervals with space only
$\text{polylog}(n, \epsilon^{-1})$ \cite{cabello2017interval}.

\section{Bounds for Maximal Independent Set}

In this section, we consider streaming algorithms for the 
maximal independent set problem.
Vertex-arrival streams (both explicit and implicit) are well-suited to the maximal independent set problem, since they
allow the implementation of the \textsc{Greedy} algorithm for independent sets, which greedily 
add every incoming vertex $v$ to an initially empty independent set $I$ if this is possible, i.e., if $I \cup \{v \}$ 
is an independent set. This yields the following result:

\begin{fact}
 The \textsc{Greedy} algorithm for independent sets is a one-pass $O(n \log n)$ space maximal independent set algorithm 
 in the vertex-arrival order streaming model (for both implicit and explict vertex streams).
\end{fact}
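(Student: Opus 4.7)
The plan is to establish two things: that the \textsc{Greedy} algorithm produces a maximal independent set on any input, and that it can be implemented in both vertex-arrival models within the claimed space bound. The correctness argument is a standard one-line observation, while the space/implementation argument splits naturally into the explicit and implicit cases.

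For correctness, I would argue that the set $I$ maintained by \textsc{Greedy} is independent by construction (we only add $v$ when $I \cup \{v\}$ remains independent), and maximal because any vertex $v \notin I$ must have been rejected at the moment it arrived, which by definition means some $u \in I$ at that time was adjacent to $v$; since $I$ only grows, $u$ remains in $I$ at the end, so $v$ has a neighbor in the final $I$, and hence $I \cup \{v\}$ is not independent. Thus no vertex can be added to $I$ after the stream ends, which is exactly maximality.

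For the space bound, the only state the algorithm stores is the set $I$ of chosen vertex identifiers. Since $|I| \le n$ and each identifier takes $O(\log n)$ bits (in the explicit model identifiers are indices in $[n]$; in the implicit model they come from a universe of size $2^{\operatorname{polylog}(n)}$ and so fit in $\operatorname{polylog}(n)$ bits), the total working memory is $O(n \log n)$ bits, or $\tilde{O}(n)$ in the implicit case (which is still within $O(n \log n)$ for the statement as phrased, or more precisely $\tilde{O}(\alpha(G))$ since only selected vertices are stored).

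The implementation step is where the two models differ slightly, so I would handle them separately. In an explicit vertex stream, when $v_i$ arrives it comes bundled with all edges $\{v_i, v_j\}$ with $j < i$; the algorithm simply scans this adjacency list and checks whether any endpoint lies in $I$, a test that can be done in $O(|I| \log n)$ time using a dictionary keyed on $I$ without any additional memory beyond $I$ itself. In an implicit vertex stream, the incident edges are not given, but the oracle $\sigma$ lets us simulate the same test by evaluating $\sigma(I_v, I_u)$ for each $u \in I$; this requires $|I|$ oracle calls per arriving vertex but no additional storage. The only mild subtlety worth mentioning is that in the implicit model we are charged only for memory, not oracle calls, so this implementation is legitimate even though its query complexity is $O(n \alpha(G))$.

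Overall I do not expect a real obstacle: the argument is essentially a bookkeeping check that the greedy decision can be made online in both arrival models using only the currently chosen independent set as state, plus the standard maximality argument. The most worthwhile thing to be precise about is why the implicit model's oracle access suffices to replace the explicit edge list, since that is the only structural difference between the two cases.
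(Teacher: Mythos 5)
Your proposal is correct and takes essentially the same approach as the paper, which offers only a one-sentence justification: the algorithm stores nothing beyond the current independent set $I$, of size at most $\alpha(G) \le n$, giving $O(n\log n)$ bits. Your additional details (the standard maximality argument, the explicit vs.\ implicit distinction with oracle calls, and the observation that identifiers in the implicit model are $\operatorname{polylog}(n)$ bits) are all sound elaborations of the same idea.
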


Since the only space required by the algorithm is to store a valid independent set, the storage is in fact bounded by $\tilde{O}(\alpha(G))$, 
the space needed to store an MIS, which is bounded by $O(n \log n)$. 

This fact raises the question how well we can solve the maximal independent set problem in edge-arrival streams, which we address in the
remainder of this section. We will first show that computing a maximal independent set in one pass in the edge-arrival model is not possible
using sublinear space, i.e., space $\Omega(n^2)$ is required. We then ask whether we can compute an independent set that is 
{\em approximately maximal} in a single pass: 

\begin{definition}[Approximate Maximality]
 Let $G=(V, E)$ be an $n$-vertex graph, and let $I \subseteq V$ be an independent set. Then $I$ is $\delta$-maximal, if $|I \cup \Gamma_G[I]| \ge \delta n$. 
\end{definition}

A $\delta$-maximal independent set $I$ covers a $\delta$-fraction of the vertices, or, in other words, when removing $I$ and
its neighbors $\Gamma_G[I]$ from the graph, then $(1-\delta)n$ vertices are remaining. We will next show that establishing
approximate maximality in edge-arrival streams requires strictly more space that computing a maximal independent set in vertex-arrival 
streams (i.e., $\omega(n)$ space), even if $\delta = \frac{24}{25}$. Regarding stronger approximate maximality, 
our lower bound yields that computing a $(1-\frac{1}{n^\epsilon})$-maximal independent set requires space 
$\Omega(n^{2-o(1)})$, for every $\epsilon > 0$.

Interestingly, if we allow an algorithm to perform multiple passes, then sublinear space algorithms for maximal independent set 
can be obtained. Such algorithms are in fact immediately implied by the correlation clustering algorithms given
in Ahn et al. \cite{acgmw15}. Their result yields the following theorem:

\begin{theorem}
 There is a $O(\log \log n)$-pass streaming algorithm for maximal independent set that uses space $\tilde{O}(n)$.
\end{theorem}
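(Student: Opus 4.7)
The plan is to reduce the maximal independent set problem on $G=(V,E)$ to a correlation clustering instance and then invoke the streaming algorithm of Ahn et al.~\cite{acgmw15}. Treat $G$ as a complete signed graph in which the edges of $E$ are the \emph{positive} (similar) edges and the remaining pairs are \emph{negative}. The classical \textsc{Pivot} procedure for correlation clustering proceeds as follows: while there is an unassigned vertex, pick one (the pivot $v$), form a cluster consisting of $v$ together with all of $v$'s still-unassigned positive neighbors, and mark this cluster assigned. Let $P$ denote the resulting set of pivots.

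I claim $P$ is always a maximal independent set of $G$. First, no two pivots are adjacent in $G$: if $u,v \in P$ with $\{u,v\} \in E$ and $u$ was chosen first, then $v$ would have been absorbed into $u$'s cluster as a positive neighbor and could not later become a pivot. Second, every $w \in V \setminus P$ lies in some pivot $v$'s cluster, which by construction forces $\{v,w\} \in E$, so $w$ has a neighbor in $P$. Hence $P$ is both independent and dominating, i.e., a maximal independent set.

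The algorithm of~\cite{acgmw15} is a streaming implementation of \textsc{Pivot} running in $O(\log \log n)$ passes and $\tilde{O}(n)$ space. It proceeds in phases: in each phase, a random subset of the currently surviving vertices is drawn as candidate pivots; a single pass over the stream identifies which candidates have no lower-ranked candidate neighbor (these become actual pivots) and absorbs their neighbors into clusters; the rest carry over. A suitably aggressive sampling schedule drives the number of surviving vertices down doubly exponentially, yielding $O(\log \log n)$ phases overall. Combining this with the reduction proves the theorem; the only point worth checking is that every vertex is indeed assigned to some cluster by the end of the algorithm, so that $P$ is fully maximal and not merely approximately so, which holds because the algorithm returns a partition of the entire vertex set $V$. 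The main obstacle, then, is not algorithmic but bookkeeping: one must confirm that the output interface of~\cite{acgmw15} exposes the pivot set (not merely the clustering), which is immediate since the pivots are chosen explicitly in each phase.
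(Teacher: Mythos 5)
Your argument matches the paper's intent exactly: the paper states the theorem is ``immediately implied'' by the correlation clustering algorithm of Ahn et al.\ and gives no further proof, and the implication you spell out---treating $E$ as the positive edges, running a streaming \textsc{Pivot}, and observing that the pivot set is precisely a maximal independent set (independent because a later pivot would have been absorbed by an earlier adjacent one; dominating because every non-pivot was absorbed into some pivot's cluster)---is the correct reason why. Your proof is correct and is essentially the same approach, just with the details the paper leaves to the reader filled in.
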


\subsection{Lower Bound for Maximal Independent Set in Edge-arrival Streams}
We give a reduction to the well-known two-party communication complexity problem \textsc{Index}:
\begin{definition}
 In the two-party communication problem \textsc{Index}, Alice holds an $N$-bit string $X \in \{0, 1\}^N$ and Bob holds
 an index $\sigma \in [N]$. Alice sends a single message to Bob, who, upon receipt, outputs $X_\sigma$.
\end{definition}
 
It is well-known that Alice essentially needs to send all $N$ bits to Bob:

\begin{theorem}[\cite{knr99}] \label{thm:index}
 The randomized constant error communication complexity of \textsc{Index} is $\Omega(N)$.
\end{theorem}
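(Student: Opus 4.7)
The plan is to prove this via a standard information-theoretic argument in the distributional setting. Take $X$ uniform on $\{0,1\}^N$ and $\sigma$ uniform on $[N]$, drawn independently of $X$ and of the protocol's coins; by Yao's minimax principle it suffices to lower bound the message length of any deterministic protocol that succeeds with constant probability under this distribution (we may fix the public randomness to a value that achieves the claimed error on average). Let $M=M(X)$ be the one-way message Alice sends; crucially, $M$ is a function of $X$ and Alice's private coins only, so $M$ is independent of $\sigma$.

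First I would extract per-coordinate information. By correctness, for each fixed $\sigma_0 \in [N]$ Bob recovers $X_{\sigma_0}$ from $(M,\sigma_0)$ with error at most some constant $\varepsilon<1/2$ (over the choice of $X$ and Alice's coins). Fano's inequality gives $H(X_{\sigma_0} \mid M, \sigma = \sigma_0) \le h(\varepsilon)$, where $h$ is binary entropy, and since $M \perp \sigma$ this yields
\[
I(X_i; M) \;\ge\; 1 - h(\varepsilon) \;=\; \Omega(1) \qquad \text{for every } i \in [N].
\]

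Next I would sum these up using the chain rule together with the fact that the coordinates of $X$ are independent. The chain rule gives $I(M;X) = \sum_i I(M; X_i \mid X_{<i})$. Because $X_i$ is independent of $X_{<i}$, conditioning only reduces entropy: $H(X_i \mid M, X_{<i}) \le H(X_i \mid M)$, so $I(M; X_i \mid X_{<i}) \ge I(M; X_i) = \Omega(1)$. Therefore
\[
|M| \;\ge\; H(M) \;\ge\; I(M; X) \;\ge\; \sum_{i=1}^N I(M; X_i) \;=\; \Omega(N).
\]

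I expect the main subtleties to be bookkeeping rather than any deep obstacle: specifically, (i) passing from worst-case randomized error to an average-case deterministic protocol (handled by fixing public coins and then averaging over Alice's private coins via Yao), and (ii) justifying that conditioning on $X_{<i}$ does not decrease the information contribution of each coordinate, which rests on independence of the bits of $X$ under the chosen hard distribution. The rest of the argument is a routine application of Fano plus the chain rule, and it tolerates any constant error $\varepsilon < 1/2$.
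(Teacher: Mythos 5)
The paper does not supply a proof of this theorem: it is imported as a black box from Kremer--Nisan--Ron \cite{knr99}, so there is no ``paper's own proof'' to compare against. Your argument is the standard information-theoretic proof of the \textsc{Index} lower bound (Fano plus the mutual-information chain rule with independent coordinates), and it is essentially correct and complete. The per-coordinate bound $I(X_i;M)\ge 1-h(\varepsilon)$ and the step $I(M;X_i \mid X_{<i}) \ge I(M;X_i)$ via independence are both sound, and $|M|\ge H(M)\ge I(M;X)$ closes the argument.

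One bookkeeping remark worth tightening: you invoke Yao to pass to a \emph{deterministic} protocol, but then describe $M$ as depending on Alice's private coins. Pick one framing. The cleanest is to skip Yao entirely: keep the protocol randomized, let $M$ denote the (random) message, and note that worst-case error $\le\varepsilon$ immediately gives, for every fixed $\sigma_0$, $\Pr_{X,\text{coins}}[\text{error}\mid\sigma=\sigma_0]\le\varepsilon$, which is exactly what Fano needs. If you instead fix all randomness to make the protocol deterministic with small \emph{average} error over $(X,\sigma)$, then the error is no longer bounded by $\varepsilon$ per index $\sigma_0$; you must then either apply Markov to get error $\le 2\varepsilon$ on half the indices, or sum Fano over all $i$ and appeal to concavity of binary entropy. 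Either fix is routine, but as written the claim ``for each fixed $\sigma_0$ the error is at most $\varepsilon$'' is only justified in the unfixed-coins setting.
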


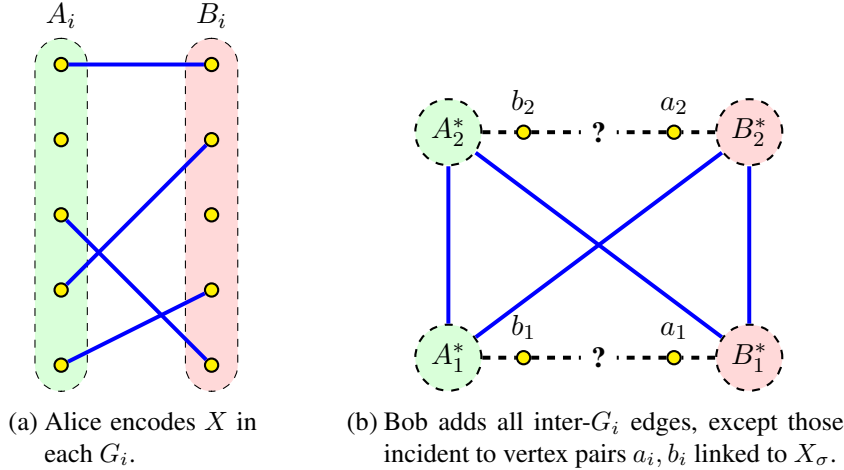
\begin{figure}[h]
    \centering
    \subcaptionbox{Alice encodes $X$ in each $G_i$.}[0.2\textwidth]{
        \begin{tikzpicture}[thick,
          vertex/.style={draw, circle, thick, fill=yellow, inner sep=0pt, minimum size=5pt},
          edge/.style={blue, line width=1.5pt}
        ]
            \foreach \y in {1,...,5}
            {
                \node[vertex] (A-\y) at (1, \y) {};
                \node[vertex] (B-\y) at (3, \y) {};
            }
            \draw[edge] (A-1) to (B-2);
            \draw[edge] (A-2) to (B-4);
            \draw[edge] (A-3) to (B-1);
            \draw[edge] (A-5) to (B-5);
            \begin{pgfonlayer}{bg}
                \node[draw, fill=green!15!white, rounded corners=10pt, dashed, fit=(A-1) (A-5), inner sep=7pt, label=above:$A_i$] {};
                \node[draw, fill=red!15!white, rounded corners=10pt, dashed, fit=(B-1) (B-5), inner sep=7pt, label=above:$B_i$] {};
            \end{pgfonlayer}
        \end{tikzpicture}
    }
    \hspace{1cm}
    \subcaptionbox{Bob adds all inter-$G_i$ edges, except those incident to vertex pairs $a_i, b_i$ linked to $X_\sigma$.}[0.4\textwidth]{
        \begin{tikzpicture}[thick,
          vertex/.style={draw, circle, thick, fill=yellow, inner sep=0pt, minimum size=5pt},
          A/.style={draw, circle, dashed, fill=green!15!white, inner sep=0pt, minimum size=25pt},
          B/.style={draw, circle, dashed, fill=red!15!white, inner sep=0pt, minimum size=25pt},
          edge/.style={dashed, line width=1.5pt},
          superedge/.style={blue, line width=1.5pt}
        ]
            \foreach \y in {1,2}
            {
                \node[vertex, label=above:$b_\y$] (b-\y) at (2, 3*\y) {};
                \node[A] (A-\y) at (1, 3*\y) {$A^\ast_\y$};
                \draw[edge] (A-\y) to (b-\y);
                \node[B] (B-\y) at (5, 3*\y) {$B^\ast_\y$};
                \node[vertex, label=above:$a_\y$] (a-\y) at (4, 3*\y) {};
                \node (q-\y) at (3, 3*\y) {\textbf{?}};
                \draw[edge] (B-\y) to (a-\y);
                \draw[edge] (a-\y) to (q-\y);
                \draw[edge] (b-\y) to (q-\y);
            }
            \draw[superedge] (A-1) to (A-2);
            \draw[superedge] (A-1) to (B-2);
            \draw[superedge] (B-1) to (B-2);
            \draw[superedge] (B-1) to (A-2);
        \end{tikzpicture}
    }
    \caption{Construction of the gadget used to prove theorem~\ref{thm:max-is-lb}.}
    \label{fig:maximal-gadget}
\end{figure}

\begin{theorem} \label{thm:max-is-lb}
 Every randomized constant error one-pass streaming algorithm in the edge arrival model that computes a maximal independent set requires $\Omega(n^2)$ space.
\end{theorem}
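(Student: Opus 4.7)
The plan is to reduce from the communication problem \textsc{Index} with an input string of length $N = \Theta(n^2)$, so that Theorem~\ref{thm:index} immediately yields the desired $\Omega(n^2)$ space bound. The reduction follows the standard streaming-to-communication template: Alice, holding $X \in \{0,1\}^N$, turns her bits into the first part of an edge stream and transmits the streaming algorithm's memory state to Bob, who then appends his own edges (determined by $\sigma$) and reads off the output maximal independent set $I$. If Bob can recover $X_\sigma$ from $I$, the algorithm's memory after processing Alice's stream must be at least $\Omega(N) = \Omega(n^2)$ bits.

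For the graph construction I would follow the gadget in Figure~\ref{fig:maximal-gadget}. The vertex set consists of $O(1)$ gadgets $G_1, G_2, \ldots$, where each gadget $G_i$ carries two sides $A_i^\ast, B_i^\ast$ of $\Theta(n)$ vertices along with three distinguished ``query'' vertices $a_i, b_i, q_i$. Alice views $X$ as indexing pairs in $A_i^\ast \times B_i^\ast$, and in every gadget she inserts the edge $(u,v)$ iff the corresponding bit of $X$ is $1$; the same $X$ is encoded in every gadget. Bob, knowing $\sigma$, identifies the pair of bipartite vertices that probes $X_\sigma$, adds the intra-gadget structural edges $a_i q_i$, $b_i q_i$ together with the edges $a_i$--$B_i^\ast$ and $b_i$--$A_i^\ast$ that enforce the ``probing'' role, and finally puts all inter-gadget edges between non-query vertices, i.e., a complete bipartite structure between $(A_i^\ast \cup B_i^\ast)$ and $(A_j^\ast \cup B_j^\ast)$ for $i \neq j$, while leaving the $a_i$'s and $b_i$'s outside this inter-gadget structure.

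To recover $X_\sigma$ from the output maximal independent set $I$, I would exploit the fact that the inter-gadget complete bipartite structure forces $I$ to be concentrated in essentially one gadget: for all-but-one index $i$, no non-query vertex of $G_i$ can appear in $I$. Within the surviving gadget, the maximality of $I$ then pins down the configuration of $\{a_i, b_i, q_i\}$ relative to the probed pair: the two query vertices can lie simultaneously in $I$ precisely when the Alice-edge between them is absent (i.e., $X_\sigma = 0$); otherwise maximality forces an alternative configuration (involving $q_i$ or only one of $a_i, b_i$) that Bob recognizes. Bob thus deterministically decodes $X_\sigma$ from $I$.

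The main obstacle will be the non-determinism intrinsic to \emph{maximal} (as opposed to maximum) independent sets: the algorithm may legitimately output any maximal IS, so the gadget must be designed so that every valid output reveals $X_\sigma$, not just some canonical one. The heart of the proof is the combinatorial case analysis of the gadget showing that the inter-gadget complete bipartite edges eliminate ``do nothing'' solutions on all-but-one gadget, while the edges $a_i$--$B_i^\ast$, $b_i$--$A_i^\ast$ and $a_i q_i$, $b_i q_i$ force the maximal IS's behavior on the remaining gadget to record the presence or absence of the probed Alice-edge. Once this case analysis is completed, the reduction delivers a protocol for \textsc{Index} using memory equal to the algorithm's space complexity $s$, and Theorem~\ref{thm:index} concludes $s = \Omega(N) = \Omega(n^2)$.
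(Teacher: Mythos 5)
Your high-level plan is the paper's plan: reduce from \textsc{Index} on $N=n^2$ bits, encode $X$ as the incidence matrix of a bipartite graph replicated in a constant number of gadgets (the paper uses two), have Bob add a complete bipartite structure between the non-query vertices of different gadgets, and then use maximality of the output to force the pair $\{a_j,b_j\}$ into $I$ for some gadget exactly when $X_\sigma=0$. The sketch correctly identifies that the main work is arguing about \emph{every} maximal IS the algorithm could output, not just a canonical one.

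However, there are several confusions in the details that would have to be untangled before the argument goes through. First, there is no vertex $q_i$: the ``\textbf{?}'' in Figure~\ref{fig:maximal-gadget} depicts the \emph{unknown edge} between $a_i$ and $b_i$ whose presence Bob is trying to determine, not a third query vertex, so the edges $a_iq_i$, $b_iq_i$ you propose do not exist. Second, the edges between $a_i$ and $B_i^\ast$ (and between $b_i$ and $A_i^\ast$) are \emph{Alice's} edges, already determined by $X$ during the first part of the stream; Bob cannot add a complete set of such edges, and doing so would break the argument (it would let the IS block $a_i$ and $b_i$ by choosing any non-query vertex in the surviving gadget). Relatedly, Alice encodes $X$ over all of $A_i\times B_i$; she cannot know $A_i^\ast,B_i^\ast$, which are defined by removing the $\sigma$-dependent query pair. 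Third, and most substantively, your decoding step analyzes the ``surviving'' gadget, but that is the wrong gadget to look at: in the surviving gadget the IS may include non-query vertices that happen to be Alice-neighbors of $a_i$ or $b_i$, so you cannot conclude anything clean about whether both of $a_i,b_i$ lie in $I$ there. The paper's argument instead proceeds by contradiction across both gadgets: if $X_\sigma=0$ and neither $\{a_1,b_1\}$ nor $\{a_2,b_2\}$ is contained in $I$, then some $x_1\in\{a_1,b_1\}$ and $x_2\in\{a_2,b_2\}$ are both uncovered, and maximality forces each to have an $I$-neighbor $y_j\in A_j^\ast\cup B_j^\ast$; but $y_1y_2$ is one of Bob's inter-gadget edges, so $y_1,y_2$ cannot both be in $I$. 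That contradiction is what pins down Bob's decoding rule: output $0$ iff some pair $\{a_j,b_j\}$ lies entirely in $I$.
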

\begin{proof}
 For an integer $n$, consider an instance $(X, \sigma)$ of \textsc{Index} with $N = n^2$. Given $X$, Alice constructs
 the input graph $G= G_1 \stackrel{.}{\cup} G_2$, which is the disjoint union of graphs $G_1$ and $G_2$, where $G_2$ 
 is a disjoint copy of $G_1$. Graph $G_1(A_1, B_1, E_1)$ is a bipartite graph with $|A_1| = |B_1| = n$. 
 The vector $X \in \{0, 1\}^{n^2}$ is seen as the $n \times n$ incident matrix of the bipartite graph $G_1$, which defines
 the edge sets of $G_1$ (and thus also of $G_2$). Alice simulates a streaming algorithm for maximal independent set
 on the stream of edges of $G_1$ and $G_2$ and sends the memory state to Bob.
 
 Given index $\sigma$, Bob identifies the pair of vertices $(a_1, b_1) \in  A_1 \times B_1$ that corresponds to the index $\sigma$. Let $(a_2, b_2)$ be the copy of $(a_1, b_1)$ in $G_2$. Let $A^\ast_1 = A_1 \setminus \{a_1\}$, similarly for $A^\ast_2, B^\ast_1, B^\ast_2$. Given the message from Alice, i.e., the memory state of the run of the maximal independent set streaming algorithm, Bob continues the algorithm by adding all edges between the two sets of vertices $A^\ast_1 \cup B^\ast_1$ and $A^\ast_2 \cup B^\ast_2$. This cannot cause a conflict, because no edges were previously present between $G_1$ and $G_2$. The construction is shown in figure~\ref{fig:maximal-gadget}.
 
 Let $I$ be the output maximal independent set computed by the algorithm. We argue that Bob can decide whether edge $(a_1, b_1)$ is in the input graph and hence determine the value of $X_\sigma$. First, suppose that $X_\sigma = 1$. Then, for every $j  \in \{1, 2\}$, $\{a_j, b_j \} \not\subseteq I$, i.e., $a_j$ and $b_j$ are not both included in $I$. We argue now that if $X_\sigma = 0$, then $\{a_1, b_1 \} \subseteq I$ or $\{a_2, b_2 \} \subseteq I$. Suppose that this is not the case, i.e., there are vertices $x_1 \in \{a_1, b_1 \}$ and $x_2  \in \{a_2, b_2 \}$ which are not included in $I$. Since $I$ is maximal, there exists a vertex $y_1 \in I \cap \Gamma_G(x_1)$
 and $y_2 \in I \cap \Gamma_G(x_2)$. Furthermore, since $a_1, b_1, a_2, b_2$ can only have edges to $B^\ast_1, A^\ast_1, B^\ast_2, A^\ast_2$ respectively, we have 
 $y_1 \in A^\ast_1 \cup B^\ast_1$ and $y_2 \in A^\ast_2 \cup B^\ast_2$. However, since Bob added all edges between these vertex sets, there must be an edge between 
 $y_1$ and $y_2$, which contradicts $I$ being an IS. Hence, if $X_\sigma = 0$, 
 then $\{a_1, b_1 \} \in I$ or $\{a_2, b_2 \} \in I$ and we can thus determine the value of $X_\sigma$.
 The space required by the streaming algorithm is therefore at least the communication complexity of \textsc{Index},
 which is $\Omega(N) = \Omega(n^2)$, by Theorem~\ref{thm:index}. 
\end{proof}

\begin{figure}[h]
    \centering
    \hspace{1cm}
    \subcaptionbox{When $X_\sigma = 1$, clearly $a_i$ and $b_i$ cannot both be in any IS. So neither pair $\lbrace a_i, b_i \rbrace$ is in output.}[0.3\textwidth]{
        \begin{tikzpicture}[thick,
          vertex/.style={draw, circle, thick, fill=yellow, inner sep=0pt, minimum size=5pt},
          A/.style={draw, circle, dashed, fill=green!15!white, inner sep=0pt, minimum size=25pt},
          B/.style={draw, circle, dashed, fill=red!15!white, inner sep=0pt, minimum size=25pt},
          edge/.style={dashed, line width=1.5pt},
          superedge/.style={blue, line width=1.5pt},
          boldedge/.style={black, line width=1.5pt}
        ]
            \foreach \y in {1,2}
            {
                \node[vertex, label=above:$b_\y$] (b-\y) at (2, 3*\y) {};
                \node[A] (A-\y) at (1, 3*\y) {$A^\ast_\y$};
                \draw[edge] (A-\y) to (b-\y);
                \node[B] (B-\y) at (4, 3*\y) {$B^\ast_\y$};
                \node[vertex, label=above:$a_\y$] (a-\y) at (3, 3*\y) {};
                \draw[edge] (B-\y) to (a-\y);
                \draw[superedge] (a-\y) to (b-\y);
            }
            \draw[boldedge] (A-1) to (A-2);
            \draw[boldedge] (A-1) to (B-2);
            \draw[boldedge] (B-1) to (B-2);
            \draw[boldedge] (B-1) to (A-2);
        \end{tikzpicture}
    }
    \hspace{1cm}
    \subcaptionbox{When $X_\sigma = 0$, in any IS only one pair can be covered by neighbours. So any maximal IS output contains a pair $\lbrace a_i, b_i \rbrace$.}[0.3\textwidth]{
        \begin{tikzpicture}[thick,
          vertex/.style={draw, circle, thick, fill=yellow, inner sep=0pt, minimum size=5pt},
          A/.style={draw, circle, dashed, fill=green!15!white, inner sep=0pt, minimum size=25pt},
          B/.style={draw, circle, dashed, fill=red!15!white, inner sep=0pt, minimum size=25pt},
          edge/.style={dashed, line width=1.5pt},
          superedge/.style={blue, line width=1.5pt},
          boldedge/.style={black, line width=1.5pt}
        ]
            \foreach \y in {1,2}
            {
                \node[vertex, label=above:$b_\y$] (b-\y) at (2, 3*\y) {};
                \node[A] (A-\y) at (1, 3*\y) {$A^\ast_\y$};
                \draw[edge] (A-\y) to (b-\y);
                \node[B] (B-\y) at (4, 3*\y) {$B^\ast_\y$};
                \node[vertex, label=above:$a_\y$] (a-\y) at (3, 3*\y) {};
                \draw[edge] (B-\y) to (a-\y);
            }
            \draw[boldedge] (A-1) to (A-2);
            \draw[boldedge] (A-1) to (B-2);
            \draw[boldedge] (B-1) to (B-2);
            \draw[boldedge] (B-1) to (A-2);
        \end{tikzpicture}
    }
    \caption{Sketch proof for theorem~\ref{thm:max-is-lb} that at least one pair $a_i, b_i$ is in any maximal IS when $\mathbf{X}_\sigma = 0$.}
    \label{fig:maximal-proof}
\end{figure}
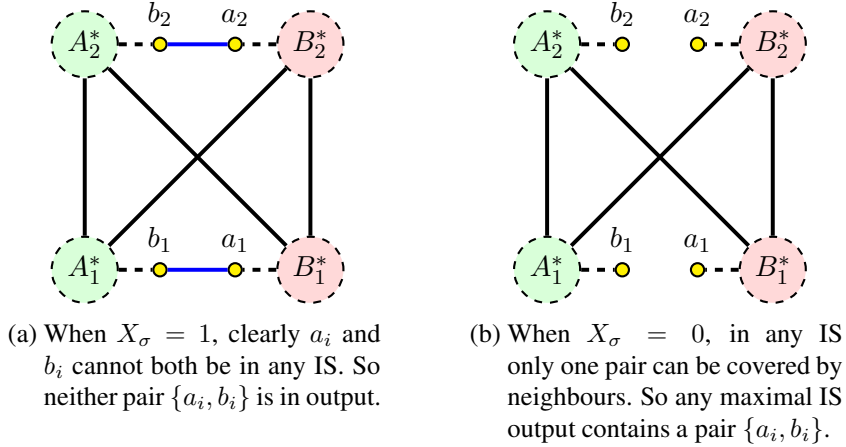

\subsection{Lower Bound for Approximate Maximality in Edge-arrival Streams}
We now extend the lower bound given in the previous subsection to approximate maximality. Central to our construction
are {\em Ruzsa-Szemer\'{e}di graphs}, which have previously been used for the construction of space lower bounds for streaming
algorithms for the maximum matching problem \cite{gkk12,k15,akly16}:

\begin{definition}[Ruzsa-Szemer\'{e}di graph]
 A bipartite graph $G$ is an {\em $(r,s)$-Ruzsa-Szemer\'{e}di graph} if its edge set can be partitioned into
 $r$ induced matchings each of size $s$. 
\end{definition}
Recall that a matching $M \subseteq E$ in a graph $G=(V, E)$ is induced, if the edge set of the vertex-induced
subgraph $G[V(M)]$ equals $M$, i.e., there are no other edges interconnecting $V(M)$ different from $M$.

Our lower bound for approximate maximality is obtained by a reduction to the two-party communication problem
\textsc{RS-Index}, defined as follows:

\begin{definition}[\textsc{RS-Index}]
 Let $H$ be an $(r,s)$-Ruzsa-Szemer\'{e}di graph with induced matchings $M_1, M_2, \dots, M_r$. For each induced 
 matching $M_i$, let $M_i' \subseteq M_i$ be a uniform random subset of size $s/2$ (we assume that $s$ is even). 
 The \textsc{RS-Index} problem is a one-way two-party communication problem, where $H$, and, in particular, $M_1, M_2, \dots, M_r$ are 
 known by both parties. In addition, 
 Alice holds the graph $G = H[ \cup_i M_i']$, and Bob holds a uniform random index $i \in \{1, 2, \dots, r \}$. Alice sends a single message 
 to Bob, who, upon receipt, outputs at least $C \cdot s$ edges of $M_i'$, for an arbitrary small constant $C$.
\end{definition}

Observe that this problem is similar in spirit to \textsc{Index}: In \textsc{Index}, Bob needs to learn one uniform random bit, while
in \textsc{RS-Index}, Bob needs to learn the presence of many edges of $M_i'$. A lower bound on the communication complexity
of \textsc{RS-Index} is implicit in \cite{gkk12} \footnote{In \cite{gkk12} a lower bound is given for the task of computing
a maximum matching. Their hardness stems from the fact that it is hard to learn many edges of $M_i'$ under the distribution described
in the definition of \textsc{RS-Index}.}:

\begin{theorem}[\cite{gkk12}] \label{thm:rs-index}
 The randomized constant error communication complexity of \textsc{RS-Index} is $\Omega(r \cdot s)$.
\end{theorem}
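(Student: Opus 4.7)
The plan is to establish the lower bound by an information-theoretic direct-sum argument over the $r$ induced matchings. Let $\Pi$ denote Alice's (random) message and write $\mathbf{M}' = (M_1',\dots,M_r')$. By construction the $M_j'$'s are mutually independent, each uniform over the $\binom{s}{s/2}$ size-$(s/2)$ subsets of its matching, so $H(M_j') = \log\binom{s}{s/2} = \Theta(s)$. Bob's index $i$ is uniform in $[r]$ and independent of $\mathbf{M}'$; the goal is to show $|\Pi| = \Omega(rs)$.

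\textbf{Single-instance step.} I would first establish that for every fixed $i$, $I(M_i';\Pi) = \Omega(s)$. Condition on Bob receiving index $i$; by the constant-error guarantee, with constant probability his output $\hat{M}$ is contained in $M_i'$ and satisfies $|\hat{M}|\geq Cs$. The number of size-$(s/2)$ subsets of $M_i$ containing a fixed $Cs$-edge set is $\binom{s-Cs}{s/2-Cs}$, and a direct binomial-entropy computation (using $H_2(\tfrac{1/2-C}{1-C}) < 1$ for $C>0$) gives
\[
\log\binom{s}{s/2} \;-\; \log\binom{s-Cs}{s/2-Cs} \;=\; \Omega(s)
\]
for every constant $C>0$. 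A Fano-style estimate on the success event then yields $H(M_i'\mid\hat{M},i)\leq H(M_i') - \Omega(s)$, and by the data-processing inequality $I(M_i';\Pi\mid i) = \Omega(s)$; independence of $i$ from $M_i'$ upgrades this to $I(M_i';\Pi) = \Omega(s)$.

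\textbf{Direct sum.} Independence of the $M_j'$'s implies that conditioning on $M_1',\dots,M_{i-1}'$ cannot decrease the mutual information between $M_i'$ and $\Pi$ (since $I(M_i';\Pi\mid M_{<i}') = H(M_i') - H(M_i'\mid\Pi,M_{<i}') \geq H(M_i') - H(M_i'\mid\Pi) = I(M_i';\Pi)$). Applying the chain rule,
\[
|\Pi| \;\geq\; H(\Pi) \;\geq\; I(\mathbf{M}';\Pi) \;=\; \sum_{i=1}^{r} I(M_i';\Pi\mid M_1',\dots,M_{i-1}') \;\geq\; \sum_{i=1}^{r} I(M_i';\Pi) \;=\; \Omega(rs),
\]
which is the claimed bound. (If the protocol's overall error is $\epsilon<1/3$, then by averaging over $i$ it has error at most, say, $2/3$ on at least a constant fraction of indices, which is enough to invoke the single-instance bound on $\Omega(r)$ coordinates; this only loses a constant factor.)

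\textbf{Main obstacle.} The delicate point is the single-instance reduction. Bob's output is only required to \emph{be contained} in $M_i'$, not to pin it down, so one must convert ``constant-probability correct identification of $Cs$ edges out of an unknown size-$(s/2)$ subset of a size-$s$ matching'' into an $\Omega(s)$ posterior-entropy reduction on $M_i'$. This is exactly what the binomial-ratio estimate above quantifies; it remains $\Omega(s)$ for every constant $C>0$, matching the theorem's ``arbitrarily small constant'' wording, though the hidden constant naturally degrades as $C\to 0$. Once this single-instance bound is in hand, the direct sum is a clean application of the chain rule thanks to the mutual independence built into the definition of \textsc{RS-Index}.
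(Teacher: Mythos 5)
The paper does not prove Theorem~\ref{thm:rs-index} itself; it is stated as implicit in~\cite{gkk12}, with a footnote indicating that the hardness comes from the difficulty of learning many edges of a random half-matching $M_i'$ under this distribution. Your proposal is a correct reconstruction of exactly that argument, and it matches the information-theoretic direct-sum route that underlies the GKK bound: a per-instance entropy drop of $\Omega(s)$ from the binomial-ratio estimate, upgraded via independence of the $M_j'$ and the chain rule $I(\mathbf{M}';\Pi)=\sum_i I(M_i';\Pi\mid M_{<i}')\ge\sum_i I(M_i';\Pi)$. Two small points worth being explicit about in a full write-up: (i) the data-processing step should condition on Bob's index, i.e.\ use that $\hat M$ is a function of $(\Pi,i,\text{Bob's coins})$ together with the independence of $i$ from $(M_i',\Pi)$ to pass from $I(M_i';\hat M)$ to $I(M_i';\Pi)$; and (ii) since the $2/3$ success guarantee is averaged over the uniform index $i$, you need the Markov-type averaging you mention at the end to conclude that a constant fraction of coordinates individually succeed with constant probability, so the direct sum is over $\Omega(r)$ coordinates rather than all $r$. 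With those filled in, the argument is sound and gives $\Omega(rs)$ with a constant degrading (but staying positive) as $C\to 0$, exactly as the theorem's ``arbitrarily small constant'' phrasing requires.
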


Equipped with the \textsc{RS-Index} problem, we now give a reduction from approximate maximality to \textsc{RS-Index},
which yields our lower bound for streaming algorithms:

\begin{lemma}\label{lem:approx-lb}
 Let $r,s,n$ be integers such that there is an $n$-vertex $(r,s)$-Ruzsa-Szemer\'{e}di graph. Then, 
 every randomized constant error one-pass streaming algorithm in the edge arrival model that computes a 
 $(1- \frac{s}{6n})$-maximal independent set requires $\Omega(r \cdot s)$ space.
\end{lemma}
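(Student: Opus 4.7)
The plan is to reduce from \textsc{RS-Index}. Because $M_i'$ is a uniform random subset of $M_i$ of size $s/2$, swapping the roles of $M_i'$ and $M_i \setminus M_i'$ is a symmetry, so identifying $Cs$ edges of $M_i \setminus M_i'$ is just as hard as the original \textsc{RS-Index} problem and is also $\Omega(rs)$ by Theorem~\ref{thm:rs-index}; our reduction actually solves this complementary variant. Alice streams the edges of her RS-Index graph $G = H[\cup_k M_k']$ through the claimed $(1 - s/(6n))$-maximal IS algorithm and sends Bob the memory state. Bob, knowing $i$, sets $U_i = V(M_i)$ and continues the stream by injecting every edge of a clique on $V \setminus U_i$. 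The crucial structural feature is that $M_i$ is induced in $H$, so the edges of the augmented graph inside $U_i$ are exactly those of $M_i'$.

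Let $I$ be the resulting $(1 - s/(6n))$-maximal IS. The clique on $V \setminus U_i$ forces $|I \cap (V \setminus U_i)| \leq 1$; call this vertex $w$ when it exists. Since $w \notin V(M_i)$, it participates in at most $r - 1$ other matchings, so $|\Gamma_G(w) \cap U_i| \leq r-1$. Bob outputs every pair $(u_j, v_j) \in M_i$ that has both endpoints in $I$: such a pair must lie in $M_i \setminus M_i'$, since otherwise its edge would witness $I$ not being independent. It remains to show that $\Omega(s)$ pairs are certified this way.

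For the counting, write $T^c = \{j : (u_j,v_j) \notin M_i'\}$ and let $a_k$ (resp.\ $b_k$) denote the number of pairs in $T^c$ (resp.\ in $T = [s] \setminus T^c$) with exactly $k$ endpoints in $I$. For $j \in T^c$, neither endpoint has any $U_i$-neighbor in the augmented graph, so each such endpoint is either in $I$, adjacent to $w$, or uncovered; for $j \in T$, a not-in-$I$ endpoint can additionally be covered by its $M_i'$-partner whenever that partner lies in $I$. The near-maximality slack of at most $s/6$ uncovered vertices combined with $|\Gamma_G(w) \cap U_i| \leq r-1$ yields $2 a_0 + a_1 + 2 b_0 \leq s/6 + (r-1)$, and together with $a_0 + a_1 + a_2 = s/2$ this gives $a_2 \geq s/3 - r + 1$, which is $\Omega(s)$ whenever $r$ is bounded by a small constant times $s$. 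The principal technical obstacle is the secondary case $I \cap (V \setminus U_i) = \emptyset$: now the $n - 2s$ vertices of $V \setminus U_i$ must be covered entirely by $I$'s external neighbors in $G$, and since each $U_i$-vertex contributes at most $r-1$ such neighbors, near-maximality already forces $|I| \geq (n - 2s - s/6)/(r-1)$; an analogous accounting inside $U_i$ without the $w$-shortcut is then needed to again obtain $a_2 = \Omega(s)$ once $n$ is large enough for this case even to be consistent with near-maximality. Combining both cases with Theorem~\ref{thm:rs-index} delivers the $\Omega(rs)$ space lower bound.
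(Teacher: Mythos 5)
Your reduction takes a genuinely different shape from the paper's, and unfortunately the difference creates a real gap.

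The paper's reduction streams \emph{two} disjoint copies of Alice's graph and then has Bob add a complete \emph{bipartite} graph between $V(G_1)\setminus V_{i1}$ and $V(G_2)\setminus V_{i2}$. That forces the following: since any vertex of $V(G_1)\setminus V_{i1}$ in $I$ rules out every vertex of $V(G_2)\setminus V_{i2}$ and vice versa, at least one copy---say copy $2$---ends up with $I\cap(V(G_2)\setminus V_{i2})=\emptyset$. Because $M_i$ is induced, a vertex of $V_{i2}$ that is an endpoint of a ``target'' edge has \emph{no} neighbors inside $V_{i2}$ and hence, in that copy, can be covered only by being in $I$ itself. The counting then becomes: each of the $s/2$ target pairs either has an uncovered endpoint in copy $2$ or has both endpoints in $I$, and at most $s/6$ vertices are uncovered, so at least $s/3$ pairs are certified---with no extra error term.

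Your single-copy construction with a clique on $V\setminus U_i$ cannot reproduce this. The clique leaves one vertex $w\in I\cap(V\setminus U_i)$, and you correctly bound $|\Gamma_G(w)\cap U_i|\le r-1$. But that $r-1$ lands directly in your inequality $2a_0+a_1+2b_0\le s/6+(r-1)$, giving only $a_2\ge s/3-r+1$. This is vacuous whenever $r\ge s/3+1$, and the lemma is stated for \emph{all} $(r,s)$-Ruzsa--Szemer\'edi graphs; the second application in the paper uses constructions with $r=n^{1-o(1)}$ and $s=n^{1-o(1)}$, where nothing guarantees $r\le s/3$. So the caveat ``whenever $r$ is bounded by a small constant times $s$'' is not a cosmetic slack but a genuine failure of the argument for the parameter regimes the lemma must cover. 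This is precisely what the two-copy trick is for: it eliminates the single ``helper'' vertex $w$ and with it the $r-1$ term.

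Two smaller points. First, you identify $I\cap(V\setminus U_i)=\emptyset$ as ``the principal technical obstacle,'' but in fact that is the \emph{easy} case: with no $w$, the same count immediately gives $a_2\ge s/3$, and the elaborate argument about $|I|\ge(n-2s-s/6)/(r-1)$ is unnecessary. The hard case is the one with $w$. Second, your symmetry observation---that recovering $\Omega(s)$ edges of $M_i\setminus M_i'$ is as hard as recovering edges of $M_i'$ because $M_i'$ is a uniform $s/2$-subset---is plausible and the paper implicitly relies on the same symmetry when it complements the matchings to form $\tilde{G}$; this part of your proposal is fine. The fix is to adopt the two-copy, bipartite-connection construction (with or without the complementing step), after which the counting goes through cleanly without the $r$-dependent loss.
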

\begin{proof}
 Let $H$ be an $n$-vertex $(r,s)$-Ruzsa-Szemer\'{e}di graph, and let $G$ be Alice's input graph for the \textsc{RS-Index} problem
 derived from $H$. Let $M_1, M_2, \dots, M_r$ denote the induced matchings in $H$, let $V_i = V(M_i)$, and let $M_i' \subseteq M_i$
 denote the subset of edges of matching $M_i$ that is included in $G$. 
 Let $i$ be Bob's input. Furthermore, let $\mathcal{A}$ be a constant error randomized one-pass streaming algorithm for the edge-arrival 
 model that computes a $(1-\frac{s}{6N})$-maximal independent set on a graph on $N$ vertices. We now show how $\mathcal{A}$ can be 
 used to solve $\textsc{RS-Index}$: 
 
 Given $G$, let $\tilde{G}$ be the graph obtained from $G$, where every induced matching $M_i'$ in $G$ is replaced by 
 edges $\tilde{M_i'} :=  M_i \setminus M_i'$ (observe that $E(G) \cup E(\tilde{G}) = E(H)$). Alice now constructs two disjoint 
 copies $G_1$ and $G_2$ of $\tilde{G}$, runs algorithm $\mathcal{A}$ on $G_1 \stackrel{.}{\cup} G_2$ (on an arbitrary ordering of their edges), and sends 
 the memory state to Bob. Bob constructs the edge set $F$ that connects every vertex $v_1 \in V(G_1) \setminus V_{i1}$ with every vertex
 $v_2 \in V(G_2) \setminus V_{i2}$,
 where $V_{i1}$ and $V_{i2}$ are the copies of the vertices $V_i$ in graphs $G_1$ and $G_2$, respectively, 
 and continues the execution of $\mathcal{A}$ on $F$. Let $I$ be the independent set produced by algorithm $\mathcal{A}$.
 
 Observe that the graph processed by algorithm $\mathcal{A}$ contains $N = 2n$ vertices. Since $I$ is $(1- \frac{s}{6N})$-maximal, 
 we have $|V \setminus \Gamma[I]| \le N - (1- \frac{s}{6N})N = s/6$. This allows us to identify $\Omega(s)$ edges of $M'_i$ as follows: 
 
 Let $a,b$ be the incident vertices to an arbitrary edge of $M_i'$, let $a_1,b_1$ be the copies of $a,b$ in $G_1$, and let $a_2, b_2$ be the 
 copies of $a,b$ in $G_2$. Observe that $a_1$ and $b_1$ are not connected in $G_1$, and $a_2$ and $b_2$ are not connected in $G_2$.
 We now claim that if all vertices $a_1, b_1, a_2, b_2$ are covered by $I$, i.e., $\{a_1, b_1, a_2, b_2 \} \subseteq \Gamma[I]$, then either 
 $\{a_1, b_1 \} \subseteq I$ or $\{a_2, b_2 \} \subseteq I$ (or both).
 Indeed, suppose that this is not the case. Then there are vertices $x_1 \in \{a_1, b_1 \}$ and $x_2 \in \{a_2, b_2 \}$ with $x_1, x_2 \notin I$. 
 Let $y_1 \in I$ be a vertex incident to $x_1$, and let $y_2 \in I$ be a vertex incident to $x_2$. By the construction of the input graph,
 $y_1 \in V(G_1) \setminus V_{i1}$, and $y_2 \in V(G_2) \setminus V_{i2}$. Observe however that the edge $y_1y_2$ was included by Bob, which implies
 that $y_1, y_2$ are not independent, a contradiction. Hence, either $\{a_1, b_1 \} \subseteq I$ or $\{a_2, b_2 \} \subseteq I$ (or both) holds. 
 Observe that this implies that the algorithm identified that there is no edge between $a_1, b_1$, which in turn implies that we learned one edge of $M_i'$.
 Hence, for every pair of vertices $a,b$ of $M_i'$, either at least one vertex among $\{a_1, b_1, a_2, b_2 \}$ is not covered by $I$, or we learn one edge of $M_i'$. Since 
 there are $s/2$ edges in $M_i'$, and at most $s/6$ vertices of the input graph are not covered by $I$, we learn at least $s/2 - s/6 = \Omega(s)$ edges 
 of $M_i'$, which thus solves \textsc{RS-Index}. By Theorem~\ref{thm:rs-index}, algorithm $\mathcal{A}$ therefore requires space $\Omega(r \cdot s)$.

\end{proof}

In \cite{gkk12} it is shown that there are $n$-vertex $(n^{\Theta(\frac{1}{\log \log n})}, (\frac{1}{4} - \epsilon)n)$ Ruzsa-Szemer\'{e}di 
graphs, for every $\epsilon > 0$, and in \cite{ams12}, it is shown that there are $n$-vertex
Ruzsa-Szemer\'{e}di graphs with $\Theta(n^{2-o(1)})$ edges such that each matching is of size $n^{1-o(1)}$. Combined with 
Lemma~\ref{lem:approx-lb}, we obtain our main theorem: 

\begin{theorem}
 Every randomized constant error one-pass streaming algorithm that computes a $\frac{24}{25}$-maximal independent set requires space
 $n^{1+ \Omega(\frac{1}{\log \log n})}$, and every such algorithm computing a $(1-\frac{1}{n^{\epsilon}})$-maximal independent 
 set requires space $\Omega(n^{2 - o(1)})$, for every $\epsilon > 0$.
\end{theorem}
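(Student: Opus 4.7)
The plan is to invoke Lemma~\ref{lem:approx-lb} twice, once with each of the two Ruzsa-Szemer\'{e}di constructions cited immediately above the theorem statement. The key observation is that $\delta$-maximality is monotone in $\delta$: any $\delta_1$-maximal independent set is also $\delta_2$-maximal for every $\delta_2 \le \delta_1$, because $|I \cup \Gamma_G[I]| \ge \delta_1 n \ge \delta_2 n$. In particular, a streaming algorithm that always produces a $\delta$-maximal independent set on an $n$-vertex graph also produces a $(1-\frac{s}{6n})$-maximal one whenever $s \ge 6n(1-\delta)$. So for each of the two target values of $\delta$, the task reduces to exhibiting an $(r,s)$-Ruzsa-Szemer\'{e}di graph on $n$ vertices whose matching size $s$ meets this threshold, and then reading off the lower bound $\Omega(rs)$ from Lemma~\ref{lem:approx-lb}.

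For the first statement ($\delta = \frac{24}{25}$), the threshold is $s \ge \frac{6n}{25}$. I would plug in the construction of \cite{gkk12}, which provides $n$-vertex Ruzsa-Szemer\'{e}di graphs with $r = n^{\Theta(1/\log\log n)}$ induced matchings of size $s = (\frac{1}{4} - \epsilon)n$. Choosing, say, $\epsilon = \frac{1}{100}$ gives $s \ge \frac{6n}{25}$, and Lemma~\ref{lem:approx-lb} yields the desired space bound $\Omega(rs) = n \cdot n^{\Theta(1/\log\log n)} = n^{1+\Omega(1/\log\log n)}$.

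For the second statement, fix $\epsilon > 0$. The threshold now reads $s \ge 6n \cdot n^{-\epsilon} = 6 n^{1-\epsilon}$. Here I would use the construction from \cite{ams12}, giving $n$-vertex Ruzsa-Szemer\'{e}di graphs whose induced matchings have size $s = n^{1-o(1)}$ and with $rs = n^{2-o(1)}$ edges in total. Because the $o(1)$ term in the exponent of $s$ eventually falls below any fixed $\epsilon > 0$, we have $s \ge 6 n^{1-\epsilon}$ for all sufficiently large $n$, so Lemma~\ref{lem:approx-lb} applies and produces space $\Omega(rs) = \Omega(n^{2-o(1)})$.

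Essentially all of the technical work lives inside Lemma~\ref{lem:approx-lb} and the two cited graph constructions, so the proof itself is bookkeeping. The one point I would state carefully is the direction of monotonicity in $\delta$: a stronger maximality guarantee is a weaker requirement in the lower bound instance, so the hardness transfers the right way. The only slight caveat, which I do not expect to cause difficulty, is that in the first regime $s$ cannot exceed $\frac{n}{4}$, so this argument only rules out constants $\delta$ up to roughly $\frac{23}{24}$; the choice $\delta = \frac{24}{25}$ comfortably lies in the accessible range and is what appears in the statement.
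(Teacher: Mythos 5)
Your proposal is correct and takes essentially the same route as the paper, which deduces the theorem as an immediate corollary of Lemma~\ref{lem:approx-lb} combined with the two cited Ruzsa--Szemer\'{e}di constructions (the $(n^{\Theta(1/\log\log n)}, (\frac14-\epsilon)n)$ graphs of \cite{gkk12} for the $\frac{24}{25}$ case, and the $n^{2-o(1)}$-edge graphs of \cite{ams12} with matchings of size $n^{1-o(1)}$ for the $(1-n^{-\epsilon})$ case). The paper leaves the substitution implicit; your write-up just makes the monotonicity in $\delta$ and the arithmetic of choosing $s$ explicit, which is sound.
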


\section{Maximum Independent Set in Explicit Vertex Streams}\label{sec:general-explicit}

We first introduce and show the hardness of a ``chained index''
problem, which we then use to show the hardness of approximating the
size $\alpha(G)$ (and hence also for finding an MIS). 

\subsection{Chained Index Communication Problem}

We define a multi-party communication problem, which allows us to
prove new lower bounds on several streaming problems.
The problem is closely related to pointer jumping and generalizes
the classic $2$-party index communication problem
to more parties by ``chaining'' together multiple
instances which have the same answer but are otherwise
independent.
In our setting, each party (other than the last) holds a vector
that contains (somewhere) a bit which is ``the answer'' to the instance.
Each party (other than the first) knows where the answer bit is
located in the previous party's vector.
Communication is one-way and private, with each player receiving a message from the previous player and then sending a message to the next player.
This rules out any trivial solution where a party can simply look up
the bit announced by a later party. 
Formally, 

\begin{definition}
    The $k$-party chained index problem $\textsc{Chain}_k$ consists of
    $(k-1)$ $n$-bit binary vectors $\lbrace X^{(i)} \rbrace_{i =
      1}^{k-1}$, along with corresponding indices $\lbrace \sigma_i
    \rbrace_{i = 1}^{k-1}$ from the range $[n]$.
    We have the promise that
    the entries $\lbrace X^{(i)}_{\sigma_i} \rbrace_{i = 1}^{k-1}$ are
    all equal to the desired answer bit $z \in \lbrace 0, 1 \rbrace$.
    The input is initially allocated as follows:
    \begin{itemize}
        \item The first party $P_1$ knows $X^{(1)}$
        \item Each intermediate party $P_p$ for $1 < p < k$ knows $X^{(p)}$ and $\sigma_{p-1}$
        \item The final party $P_k$ knows just $\sigma_{k-1}$
    \end{itemize}
    Communication proceeds as follows: $P_1$ sends a single message to
    $P_2$, then $P_2$ communicates to $P_3$, and so on, with each
    party sending exactly one message to its immediate
    successor. After all messages are sent, $P_k$ must correctly
    output $z$, succeeding with probability at least $2/3$.
    If the promise condition is violated, any output is considered correct.
\end{definition}

There is a trivial communication upper bound of $O(n)$ bits: for
instance, simply have the penultimate party send $X^{(k-1)}$ to the
final party who can then return $X^{(k-1)}_{\sigma_{k-1}}$.
We now show lower bounds by a reduction from a different multi-party
communication problem.

\begin{definition}[\cite{damm1998some}]
    The Boolean conservative one-way $k$-party pointer jumping problem
    $\textsc{Jump}_k$ consists of a constant $\alpha \in [n]$ and
    $k-1$ functions $\lbrace f_i \rbrace_{i = 2}^k$.
    The first $k-2$ are of the form $f_i : [n] \to [n]$, and the final
    one is of the form $f_{k-1} : [n] \to \lbrace 0,1 \rbrace$.
    We use $f_{i:j}$ to refer to the composition\footnote{Using the convention that $(g \circ h)(x) = h(g(x))$.} of functions $f_i \circ f_{i+1} \circ \cdots \circ f_{j-1} \circ f_j$. The input is divided as follows:
    \begin{itemize}
        \item The first party $P_1$ knows all the functions $\lbrace f_i \rbrace_{i = 2}^k$
        \item The second party $P_2$ knows $\alpha$ and every $f_j$ for $j \geq 3$
        \item Each other party $P_i$ knows $f_{2:i-1}(\alpha)$ and every $f_j$ for $j \geq i + 1$
    \end{itemize}
    
Each party sends exactly one message in ascending order to their
immediate successor, i.e. $P_1$ sends to $P_2$, then $P_2$ sends to $P_3$, and so on. After all messages are sent, $P_k$ must correctly output $f_{2:k}(\alpha)$ with probability at least $2/3$.
\end{definition}

The conservative version of one-way $k$-party pointer jumping problem
was introduced and studied in \cite{damm1998some}, showing
$\Omega(\frac{n}{k^2})$ hardness for $k \in o(n^{\frac{1}{3}})$ for a
version with a non-Boolean final layer.
Later, \cite{chakrabarti2007lower} extended this to all $k$ and to the Boolean version.

\begin{theorem}[Theorem~2 in \cite{chakrabarti2007lower}]\label{thm:pointer-jumping}
    Any communication scheme  $\mathcal{A}$  which solves
    $\textsc{Jump}_k$ must communicate at least $\Omega(\frac{n}{k^2})$ bits.
\end{theorem}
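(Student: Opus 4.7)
The plan is to derive the $\Omega(n/k^2)$ lower bound via an information-complexity argument in the one-way, private-coin model. I would first fix a hard product distribution $\mu$ on inputs: each $f_i$ for $2 \le i < k$ is uniform on functions $[n]\to[n]$, $f_{k-1}$ is uniform on functions $[n]\to\{0,1\}$, and $\alpha$ is uniform on $[n]$. Under $\mu$ the answer $f_{2:k}(\alpha)$ is balanced. By Yao's principle it suffices to lower-bound the expected transcript length of deterministic protocols that are correct with probability $2/3$ over $\mu$.

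The workhorse is a direct-sum / round-elimination argument. For a correct protocol $\Pi$ with messages $T_1,\ldots,T_{k-1}$, I would study the internal information cost $\sum_i I(T_i ; \text{private input of } P_i \mid T_{<i})$. Since $P_k$ must output $f_{2:k}(\alpha)$ with advantage, Fano's inequality forces $T_{k-1}$ to carry $\Omega(1)$ bits of information about the ``active'' pointer $f_{2:k-1}(\alpha)$. But $P_i$ only ``knows'' the prefix pointer $f_{2:i-1}(\alpha)$ through the upstream transcript, so by the chain rule this $\Omega(1)$ bit of information has to be built up across the $k-1$ messages. The key estimate is that each message $T_i$, conditioned on $P_i$'s view (which, thanks to the conservative promise, contains all downstream $f_j$ but nothing about $\alpha$ beyond what came through $T_{i-1}$), contributes only an $O(k \cdot |T_i|/n)$ fraction of the needed information per coordinate --- the two factors of $k$ arising from (i) averaging over which of the $\sim k$ rounds first ``pins down'' the critical coordinate, and (ii) the fact that the useful information is spread across $n$ input symbols.

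Summing the contributions and equating to the $\Omega(1)$ information required at the output yields $\sum_i |T_i| = \Omega(n/k^2)$. An equivalent view is message compression: if $P_1$'s message has $c$ bits, a simulation argument shows that the residual problem on $P_2,\ldots,P_k$ is itself essentially $\textsc{Jump}_{k-1}$ on a domain shrunk by a factor $2^{O(c/k)}$, so iterating $k-1$ times gives the same bound.

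The principal technical obstacle is the conservative condition: each $P_i$ knows all \emph{downstream} functions $f_{i+1},\ldots,f_{k-1}$. Naively this powerful side information might let parties precompute partial answers and compress messages well below $\Omega(n/k^2)$. The proof has to rule this out by showing that, without any upstream information about $\alpha$, downstream knowledge is uncorrelated with the actual evaluation path; formalizing this requires careful conditioning in the information-cost analysis, isolating the residual uncertainty that each message must genuinely dispel. This is the crux of obtaining the $k^{-2}$ dependence rather than a much weaker bound.
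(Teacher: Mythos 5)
The paper does not prove this theorem: it is stated verbatim as Theorem~2 of \cite{chakrabarti2007lower} and cited as a black box, so there is no ``paper's own proof'' against which to measure your argument. Your sketch is thematically aligned with how lower bounds for conservative one-way pointer jumping are established in that literature (information cost combined with a round-elimination flavor), but as written it is a program for a proof, not a proof. The two load-bearing claims are asserted without justification: (i) the per-message estimate that $T_i$ contributes only an $O(k\cdot|T_i|/n)$ fraction of the required information, with the two factors of $k$ explained only heuristically; and (ii) the round-elimination step that a $c$-bit first message yields a residual $\textsc{Jump}_{k-1}$ instance on a domain shrunk by a factor $2^{O(c/k)}$. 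These are exactly the places where the $k^{-2}$ dependence (rather than, say, a weaker $k^{-1}$ or logarithmic dependence) has to be earned, and nothing in the sketch establishes them.

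You have also correctly identified --- and then deferred --- the central technical obstacle: the conservative model hands each player all downstream functions, so one must rule out that this rich side information lets players precompress messages well below $\Omega(n/k^2)$. Saying that ``formalizing this requires careful conditioning in the information-cost analysis'' names the problem but does not solve it; the conditioning argument is the crux, and getting it wrong typically collapses the bound. In short, as a blind reconstruction of Chakrabarti's result the sketch points at the right toolbox but does not constitute a proof. If the result were needed from scratch, the full argument of \cite{chakrabarti2007lower} would have to be reproduced; as it stands, the paper correctly treats it as an imported theorem.
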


Hardness for our new problem is then as follows.

\begin{theorem}
  Any communication scheme $\mathcal{B}$ which solves
  $\textsc{Chain}_k$ 
  must communicate at least $\Omega(\frac{n}{k^2})$ bits.
\end{theorem}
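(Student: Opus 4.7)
My plan is to reduce $\textsc{Jump}_k$ to $\textsc{Chain}_k$, so that Theorem~\ref{thm:pointer-jumping} immediately transfers the $\Omega(n/k^2)$ lower bound. Concretely, I will show that given any protocol $\mathcal{B}$ solving $\textsc{Chain}_k$ with $c$ bits of communication, the $\textsc{Jump}_k$ players can locally build a Chain instance from their Jump inputs, run $\mathcal{B}$ with no additional communication, and read off the Jump answer from $\mathcal{B}$'s output.

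For the reduction I would set, for each $i \in \{1, \ldots, k-1\}$,
\[ \sigma_i := f_{2:i}(\alpha) \quad \text{and} \quad X^{(i)}_j := f_{i+1:k}(j) \text{ for all } j \in [n], \]
adopting the convention that the empty composition is the identity (so $\sigma_1 = \alpha$). Because the last function in the composition is Boolean, each $X^{(i)}$ is a genuine $n$-bit string as Chain requires. The ``answer bit'' of the Chain instance is then $z := f_{2:k}(\alpha)$, and the promise condition $X^{(i)}_{\sigma_i} = z$ is immediate from the definitions: $X^{(i)}_{\sigma_i} = f_{i+1:k}(f_{2:i}(\alpha)) = f_{2:k}(\alpha) = z$.

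Next I would verify that the Jump information split lines up perfectly with what Chain demands of each player. Party $P_1$ of Jump knows all of $f_2, \ldots, f_k$ and can therefore compute $X^{(1)}_j = f_{2:k}(j)$ entry-by-entry. An intermediate Jump party $P_i$ (with $2 \le i < k$) knows the pointer $f_{2:i-1}(\alpha) = \sigma_{i-1}$ and the functions $f_{i+1}, \ldots, f_k$, which is exactly what is needed both to hold $\sigma_{i-1}$ and to evaluate $X^{(i)}_j = f_{i+1:k}(j)$. The last party $P_k$ knows $f_{2:k-1}(\alpha) = \sigma_{k-1}$, matching Chain's requirement on the final player. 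Each Jump party thus constructs its Chain input locally, the players then run $\mathcal{B}$, and $P_k$ reports the output $z$ as the Jump answer.

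I do not anticipate a serious obstacle; the main thing to get right is the index bookkeeping --- confirming that $\sigma_i$, which Chain places at $P_{i+1}$, aligns with the intermediate pointer $f_{2:i}(\alpha)$ that Jump also places at $P_{i+1}$, and that the ``future'' functions $f_{i+1}, \ldots, f_k$ available to Jump's $P_i$ are exactly those whose composition defines $X^{(i)}$. The Boolean output of the last function is what keeps each $X^{(i)}$ of length $n$ rather than $n\log n$, so the reduction preserves the parameter and the $\Omega(n/k^2)$ bound transfers to $\textsc{Chain}_k$ with no loss.
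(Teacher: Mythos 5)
Your proof is correct and takes essentially the same approach as the paper: a zero-communication reduction from $\textsc{Jump}_k$ to $\textsc{Chain}_k$ using the assignments $X^{(i)}_j = f_{i+1:k}(j)$ and $\sigma_i = f_{2:i}(\alpha)$, then transferring the lower bound from Theorem~\ref{thm:pointer-jumping}. The index-alignment checks and the observation that the Boolean final function keeps each $X^{(i)}$ an $n$-bit string match the paper's argument.
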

\begin{proof}
    We prove the claim by showing that any instance of
    $\textsc{Jump}_k$ can be reduced to an instance of
    $\textsc{Chain}_k$ without any communication.
    Hence, any algorithm which solves $\textsc{Chain}_k$ can solve
    $\textsc{Jump}_k$ with no change in the communication cost.
    Combining this with the lower bound of Theorem~\ref{thm:pointer-jumping} gives the result.
    
    Fix an instance of $\textsc{Jump}_k$.
    For each $i$ let $X^{(i)}$ be the binary vector whose
    $j$\textsuperscript{th} entry is $f_{i+1:k}(j)$.
    For each $i$ let $\sigma_i = f_{2:i}(\alpha)$.
    Now we observe three facts:
    \begin{itemize}
        \item Every $\lbrace X^{(i)}_{\sigma_i} \rbrace_{i=1}^{k-1}$ is equal to $f_{2:k}(\alpha)$
        \item Each party $P_i$ for $i < k$ knows all the information required to compute $X^{(i)}$
        \item Each party $P_i$ for $i > 1$ knows all the information required to compute $\sigma_{i-1}$
    \end{itemize}
    So we have constructed (with no communication) a $k$-party chained index problem which, if solved, will tell us exactly $f_{2:k}(\alpha)$.
    It therefore follows that the commnication cost for any solution to
   $\textsc{Chain}_k$ is at least that for $\textsc{Jump}_k$.
\end{proof}

In particular, for constant $k$, we have a tight bound on the communication complexity of the $k$-party chained index problem of $\Theta(n)$.
We conjecture that a dependence on $k$ is not necessary. 

\begin{conjecture}
    Any communication scheme which solves $\textsc{Chain}_k$ requires $\Omega(n)$ communication.
\end{conjecture}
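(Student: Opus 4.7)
The plan is to prove the conjectured $\Omega(n)$ lower bound by a direct argument that sidesteps the reduction through $\textsc{Jump}_k$, whose own $\Omega(n/k^2)$ hardness from \cite{chakrabarti2007lower} is the source of the unwanted factor. The structural feature to exploit is that in $\textsc{Chain}_k$ the answer bit $z$ is merely copied (under the promise) rather than emerging from an arbitrary composition of functions as in pointer jumping; intuitively, each intermediate party can only verify $z$, not meaningfully compress new information about it.

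I would fix the natural hard distribution $\mu$ on valid inputs: $z$ uniform in $\{0,1\}$, the indices $\sigma_i$ independent and uniform in $[n]$, and each $X^{(i)}$ uniform in $\{0,1\}^n$ conditioned on $X^{(i)}_{\sigma_i}=z$. As a warm-up I would attempt the naive direct reduction from 2-party $\textsc{Index}$: Alice, holding $X$, plays $P_1$ with $X^{(1)}:=X$; Bob, holding $\sigma$, plays $P_k$ with $\sigma_{k-1}:=\sigma$; and the intermediate parties are simulated using public randomness, with the output accepted only when the promise happens to hold. Because each of the $k-2$ intermediate planted bits matches $z$ with probability $1/2$ under this coupling, we lose a factor of $2^{k-2}$ in success probability, yielding only $\Omega(n/2^k)$ and thus no improvement for super-constant $k$.

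The main obstacle, and where I expect most of the effort to go, is eliminating this exponential loss. My proposed route is an information-complexity argument: writing $\Pi$ for the transcript, use correctness together with Fano's inequality to conclude $I_\mu(\Pi;z\mid\sigma_{k-1})\ge\Omega(1)$, and then decompose this into per-message contributions bounded by $|M_p|$. The technical heart is a lemma showing that the intermediate vectors $X^{(2)},\ldots,X^{(k-1)}$ carry essentially no usable information about $z$ given what the corresponding party knows: because $P_i$ does not know $\sigma_i$, its vector is statistically indistinguishable from uniform with respect to any function that does not already commit to the planted location, so the burden of transmitting information about $z$ must fall on the $P_1 \to P_2$ boundary, where a reduction to 2-party $\textsc{Index}$ then yields the tight $\Omega(n)$ bound. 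The risk is that conditional knowledge from later parties leaks through the transcript in subtle ways; quantifying this leakage, perhaps via a round-elimination or conditional-information argument in the style of \cite{chakrabarti2007lower}, is where most of the technical work will concentrate.
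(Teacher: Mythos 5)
This statement is posed in the paper as an \emph{open conjecture}: the authors prove only the weaker bound $\Omega(n/k^2)$ via a reduction from conservative pointer jumping ($\textsc{Jump}_k$), and explicitly note that removing the $k$-dependence would strengthen Theorem~\ref{thm:explicit-vertex} to $\Omega(n^2/c^5)$. There is therefore no paper proof to compare against; what you are attempting is to settle an open problem, and the bar is accordingly higher.

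Your sketch has a concrete gap in the step asserting that ``the burden of transmitting information about $z$ must fall on the $P_1 \to P_2$ boundary.'' The chain is structurally symmetric across its links: $P_p$ knows $\sigma_{p-1}$ but not $X^{(p-1)}$, and knows $X^{(p)}$ but not $\sigma_p$, so \emph{every} link $(P_p, P_{p+1})$ carries an embedded copy of \textsc{Index}, not just the first. A protocol is free to make $P_1$ send nothing, in which case the burden shifts entirely to later links; more generally, which link is ``expensive'' can depend on shared randomness or even on the inputs themselves. Your per-message information decomposition $I_\mu(\Pi;z\mid\sigma_{k-1}) \le \sum_p |M_p|$ is fine as far as it goes, but it only yields $\sum_p |M_p| = \Omega(1)$ unless you can show that the mutual information is concentrated in a way that forces one \emph{specific} message to be $\Omega(n)$ bits, and nothing in the sketch does that. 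The difficulty you flag at the end --- that later parties can cheaply relay information $P_2$ extracted from $M_1$ using $\sigma_1$, and that this leakage must be quantified --- is precisely the obstacle that the round-elimination analysis of conservative pointer jumping runs into, and it is exactly where the $1/k^2$ factor arises; you have restated the obstacle rather than circumvented it. The one genuinely new lever $\textsc{Chain}_k$ offers over $\textsc{Jump}_k$ is the promise (the answer bit is \emph{copied}, not the output of an arbitrary composed map), and any successful argument must exploit this in an essential way --- for instance, via a direct-sum or self-reducibility argument across the $k-1$ planted copies of \textsc{Index}. Your current sketch invokes the promise only to justify the incorrect ``burden falls on the first link'' claim, so the core technical idea is still missing.
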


If proven, this would give a hardness of $\Omega(\frac{n^2}{c^5})$ for theorem~\ref{thm:explicit-vertex}.

Although the two problems of $\textsc{Jump}_k$ and $\textsc{Chain}_k$
may look fairly similar, we find that the form of $\textsc{Chain}_k$ is
much more convenient to show lower bounds for independent set and
related problems, as we show in the subsequent sections.

\subsection{MIS hardness in explicit vertex streams}

We show a new lower bound for the vertex streaming space complexity of approximate maximum independent set.

\begin{theorem}\label{thm:explicit-vertex}
Any algorithm for the explicit vertex stream model which finds a $c$-approximation to $\alpha(G)$ with probability at least $2/3$ requires $\Omega\left( \frac{n^2}{c^7} \right)$ space.
\end{theorem}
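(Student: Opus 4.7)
The plan is to reduce $\textsc{Chain}_k$ to the $c$-approximate MIS problem on explicit vertex streams: given a streaming algorithm $\mathcal{A}$ using space $s$, the $k$ parties jointly simulate $\mathcal{A}$ on a stream of a graph $G$ whose MIS size encodes the answer bit $z$. Each party $P_i$ injects a ``layer'' of vertices and is responsible for the portion of the stream that requires knowledge of $X^{(i)}$ and $\sigma_{i-1}$; this is consistent with the chain's information pattern, since the vertices introduced by $P_i$ only need to carry edges back to the immediately preceding layer. After each party's portion of the stream, the current memory state (of size $s$) is forwarded to the next party, matching the one-way communication pattern of $\textsc{Chain}_k$.

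\textbf{Per-layer gadget and gap amplification.} Each party $P_i$ introduces $m = \Theta(N/k)$ vertices, interpreting $X^{(i)}$ as the adjacency matrix of a bipartite structure connecting its layer to the previous one, so that $\Theta(m^2)$ bits of $X^{(i)}$ are absorbed per layer. A naive bipartite encoding induces only a constant-factor gap between $\alpha_{z=0}(G)$ and $\alpha_{z=1}(G)$. To amplify this to a factor of $c$, I plan to blow up each logical vertex into an erasure-code-based cluster whose clique and independence numbers are tightly controlled: in one setting of the relevant index bit the MIS can include a $\Theta(1)$ fraction of the cluster, while in the other setting it is restricted to a $\Theta(1/c)$ fraction. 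Chained across the $k-1$ layers whose index bits all equal $z$, the per-layer effects accumulate into the required $c$-factor separation, so any $c$-approximation of $\alpha(G)$ recovers $z$.

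\textbf{Parameter balancing.} After accounting for the blow-up overhead, each $\textsc{Chain}_k$ vector carries $n = \Theta(m^2/\operatorname{poly}(c))$ meaningful bits. Plugging this into the $\Omega(n/k^2)$ communication lower bound from the previous subsection gives $s = \Omega(N^2/(k^4 \operatorname{poly}(c)))$. Choosing $k$ as an appropriate polynomial in $c$, so that the $k^4$ factor, the blow-up overhead, and the approximation budget all line up, drives this expression down to the target $\Omega(N^2/c^7)$. The reduction is then completed by verifying that any randomized algorithm with success probability $2/3$ can be amplified and used inside the simulation without changing the asymptotic space.

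\textbf{Main obstacle.} The hardest part will be designing the erasure-code cluster gadget so that (i) it can be encoded from $X^{(i)}$ entry by entry, (ii) $\sigma_{i-1}$ suffices to wire the previous layer's selected bit into the current one, and (iii) the per-layer independence numbers compose correctly along the chain, with no ``shortcut'' allowing the MIS to avoid engaging non-trivially with some layer. The erasure-code-based bound on clique sizes is precisely what rules out such shortcuts, and it is the step most sensitive to the exact exponent of $c$ appearing in the final bound; a looser gadget would give a worse exponent, while a sharper one (or a proof of the conjectured tight bound for $\textsc{Chain}_k$) would push the lower bound closer to $\Omega(N^2/c^2)$.
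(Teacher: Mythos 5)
Your high-level strategy --- reduce $\textsc{Chain}_k$ with $k = \Theta(c)$ to $c$-approximate MIS via an erasure-code gadget, then balance parameters against the $\Omega(n/k^2)$ communication bound --- is exactly the paper's. The paper additionally works through the complement problem (clique number), since a vertex-arrival algorithm can complement each incoming vertex's edge list on the fly; this is cosmetic but makes the gadget easier to describe.

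However, two of your specific claims about the gadget would break the construction. First, $X^{(i)}$ is \emph{not} encoded as a bipartite adjacency matrix between layers $i-1$ and $i$; it is encoded in the \emph{intra-layer} edges of party $P_i$'s block of $m = n/(2c)$ vertices, using a Reed--Solomon-style code (Lemma~\ref{lem:erasure}) that packs $\Theta(m^2/c^2)$ edge-disjoint $2c$-cliques into $m$ vertices while forbidding any larger clique; bit $j$ of $X^{(i)}$ simply toggles whether clique $\mathcal{K}^i_j$ is present. The inter-layer edges carry no $X$-information at all --- they are determined solely by the indices $\sigma_j$. Second, and more seriously, your assertion that party $P_i$ ``only need[s] to carry edges back to the immediately preceding layer'' is wrong. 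In the paper, $P_i$ connects \emph{all} of its vertices to the selected clique $\mathcal{K}^j_{\sigma_j}$ of \emph{every} earlier party $j < i$; the $O(c)$-sized list $\sigma_1, \dots, \sigma_{i-1}$ is appended to the messages at negligible cost. This all-to-previous wiring is what fuses the $2c$ selected cliques of size $2c$ each into a single clique of size $4c^2$ when $z = 1$, against clique number at most $4c-1$ when $z = 0$, and it also powers the no-shortcut argument: a clique containing two vertices from a non-selected $\mathcal{K}^i_j$ can take nothing from any later party, since later parties attach only to $\mathcal{K}^i_{\sigma_i}$. With only consecutive connections, the selected cliques form a path, the maximum clique stays $O(c)$, and the factor-$c$ gap evaporates. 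Your parameter bookkeeping is also slightly off: the per-message space bound is $\Omega(b/k^3)$, obtained by dividing the total communication $\Omega(b/k^2)$ over the $k-1$ messages, and the exponents line up precisely for $k = 2c$ (linear in $c$, not an arbitrary polynomial) together with the $\Theta(c^2)$ density loss from the erasure code --- these are the arithmetic details that produce $c^7$ rather than some other power.
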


For ease of argument, we will actually prove an equivalent result for
the problem of clique number approximation, and then note that the
complement of the constructed graph can be used with the same
arguments to prove Theorem~\ref{thm:explicit-vertex}.
To see this equivalence, note that an MIS of a graph is a maximum
clique in its complement, and in a vertex streaming model an algorithm
can simulate operation on the complement by taking the complement of
each vertex as it arrives.
Importantly, 
in the edge-arrival model this reduction is not possible, since the
model does not allow  to `subtract' the observed edges from the complete graph.

\begin{theorem}\label{thm:explicit-vertex-anti}
Any algorithm for the explicit vertex stream model which finds a $c$-approximation to the size of the largest clique $\omega(G)$ with probability at least $2/3$ requires $\Omega\left( \frac{n^2}{c^7} \right)$ space.
\end{theorem}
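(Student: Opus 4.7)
The plan is to reduce the streaming clique-approximation problem from the $k$-party chained-index problem $\textsc{Chain}_k$, with $k$ chosen as an appropriate polynomial in $c$. I would show that any one-pass streaming algorithm using $S$ bits of memory yields a $k$-party protocol of total cost at most $kS$: party $P_1$ simulates the algorithm on the stream of its gadget and forwards the memory state, each subsequent party resumes the algorithm on its own gadget using the received state, and the terminal party reads off the final approximation from the resulting memory. Combining $kS \geq \Omega(N/k^2)$ from the lower bound on $\textsc{Chain}_k$ with an appropriate choice of vector length $N$ and party count $k$ (balancing what a single gadget can encode against the approximation gap one needs to achieve) then yields $S = \Omega(n^2/c^7)$.

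For the reduction, each party $P_p$ builds a gadget $H_p$ on about $n/k$ vertices that encodes its vector $X^{(p)}$, and party $P_{p+1}$ is responsible for the cross-gadget edges between $H_{p+1}$ and earlier layers, using only its knowledge of $\sigma_p$ and $X^{(p+1)}$. The design target is a clean dichotomy: in the YES case (common answer bit $z = 1$), ``correct'' sub-structures inside the gadgets glue together into one global clique of size $\Theta(ks)$ for some parameter $s$; in the NO case ($z = 0$), no such cross-layer assembly is possible and $\omega(G) \leq O(s)$. Since $k$ is chosen to exceed $c$, any $c$-approximation to $\omega(G)$ must distinguish these two cases, which in turn determines $z$ and solves $\textsc{Chain}_k$. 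Theorem~\ref{thm:explicit-vertex} then follows immediately by taking complements, since the explicit vertex-stream model allows the algorithm to flip each vertex's incident edges upon arrival at no space cost.

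The hard part will be constructing the gadget $H_p$ itself. Three properties must hold simultaneously: every clique inside $H_p$ has size at most $s$; the two ``query'' sub-cliques of size exactly $s$ corresponding to the two possible values of $X^{(p)}_{\sigma_p}$ are discoverable by $P_{p+1}$ using only $\sigma_p$, never the hidden vector $X^{(p)}$; and the cross-gadget edges enforce exactly the synchronisation needed for the YES/NO gap. My approach would be to label the vertices of $H_p$ by positions of a high-distance erasure code applied to $X^{(p)}$, so that any clique inside $H_p$ must be supported on ``consistent'' codeword positions, with its size tightly controlled by the distance of the code. Choosing the code's rate and distance and balancing $m$, $s$, $k$, and $N$ so that the final exponent on $c$ comes out to exactly $7$ is a delicate calculation and is the main technical obstacle of the proof.
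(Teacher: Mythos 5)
Your proposal follows essentially the same route as the paper: reduce from $\textsc{Chain}_k$ with $k = \Theta(c)$, encode each party's vector as a collection of edge-disjoint size-$s$ cliques inside a Reed–Solomon-style gadget (the paper's Lemma~\ref{lem:erasure} builds exactly this on $n/(2c)$ vertices, giving $\Theta(n^2/c^4)$ cliques of size $2c$ and no larger clique), wire cross-gadget edges so that the $\sigma$-indicated cliques assemble into a $\Theta(kc)$-size clique iff $z=1$, and read off the loss $\Omega(N/k^2)$ divided by $k-1$ messages to get $\Omega(n^2/c^7)$, with the complement trick yielding Theorem~\ref{thm:explicit-vertex}. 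One small gap: you state that $P_{p+1}$ adds the cross-gadget edges ``using only its knowledge of $\sigma_p$ and $X^{(p+1)}$,'' but for the glued clique to be complete $P_{p+1}$ must connect its vertices to the $\sigma_j$-indexed clique in \emph{every} earlier layer $j < p+1$, not just layer $p$; the paper fixes this by having each party append the indices $\sigma_1,\dots,\sigma_{p}$ to its outgoing message at only $O(c \log b)$ extra cost, which does not affect the asymptotics.
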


The heart of our construction is to use an erasure code to encode a length
$O(\frac{n^2}{c^4})$ binary vector on $O(\frac{n}{c})$ vertices, with
each bit corresponding to the presence or absences of a clique of size
$2c$.
The use of the erasure code is to ensure that no pair of these cliques
can share an edge. We can then chain together $2c$ such gadgets to
encode an instance of $\textsc{Chain}_{2c}$ such that if the correct
answer is $1$, the resulting graph has an independent set of size
$4c^2$, while if the correct answer is $0$ the graph has no
independent sets of size larger than $4c - 1$. Any (one-sided)
$c$-approximation algorithm could distinguish these two cases, and so
the result is proved.

First we define our clique gadget.

\begin{lemma}\label{lem:erasure}
    For any positive integers $n$ and $c^2 < \frac{n}{8}$, there exists a graph on $n$ vertices containing $\frac{n^2}{16c^2}$ edge-disjoint cliques of size $2c$ and no cliques of size larger than $2c$.
\end{lemma}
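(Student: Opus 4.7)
My plan is to realise the $2c$-cliques as evaluations of affine polynomials over a finite field $\mathbb{F}_q$, giving a Reed--Solomon-style construction in which the required combinatorial properties fall out almost immediately from elementary algebra. I would first use Bertrand's postulate to choose a prime $q$ in the range $[n/(4c), n/(2c)]$; the hypothesis $c^2 < n/8$ yields $n/(4c) > 2c$, so in particular $q \ge 2c$. Then I partition the $n$ vertices into $2c$ groups $V_1, \dots, V_{2c}$, each consisting of $q$ vertices labelled by the elements of $\mathbb{F}_q$, leaving the remaining $n - 2cq$ vertices isolated.

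For every pair $(a,b) \in \mathbb{F}_q^2$ I declare a clique $C_{a,b}$ on the $2c$ vertices $\{(i,\, a i + b) : i \in [2c]\}$, one per group (the integers $1, \dots, 2c$ are embedded into $\mathbb{F}_q$ as distinct elements since $q \ge 2c$); the edge set of $G$ is the union of the edge sets of all $C_{a,b}$. This yields $q^2 \ge n^2/(16 c^2)$ cliques of size exactly $2c$, matching the count in the lemma. Edge-disjointness is then immediate: if $(a,b) \neq (a',b')$ then $a i + b = a' i + b'$ has at most one solution $i \in \mathbb{F}_q$, so $C_{a,b}$ and $C_{a',b'}$ share at most one vertex and therefore no edge.

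The bound on maximum clique size is where the grouped structure really pays off: every $C_{a,b}$ is a transversal of $V_1, \dots, V_{2c}$, so every edge of $G$ joins two distinct groups, making $G$ a $2c$-partite graph; hence no clique of $G$ can exceed $2c$ vertices, and the $C_{a,b}$ themselves show this bound is tight. The only mild obstacle is the arithmetic bookkeeping to confirm that $c^2 < n/8$ is strong enough for Bertrand's postulate to supply a prime in $[n/(4c), n/(2c)]$ and to force $q \ge 2c$; everything else follows from the polynomial-interpolation fact that two distinct affine functions over $\mathbb{F}_q$ agree in at most one point.
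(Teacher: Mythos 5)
Your proposal is correct and takes essentially the same approach as the paper: both pick a prime in $[n/(4c), n/(2c)]$, partition the vertices into $2c$ groups of that size, realise the cliques as graphs of affine polynomials (``lines'') over the prime field, and then obtain edge-disjointness from the fact that two distinct lines agree in at most one point and the clique bound from $2c$-partiteness. The only difference is notational: the paper indexes the lines by ``$\mathcal{P} \in \mathrm{GF}(p^2)$,'' which you make explicit as pairs $(a,b) \in \mathbb{F}_q^2$ parametrising $i \mapsto ai+b$.
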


\begin{figure}[h]
    \begin{center}
        \begin{subfigure}{0.3\textwidth}
            \resizebox{\textwidth}{!}{
                \begin{tikzpicture}[thick,
                  vertex/.style={draw, circle, thick, fill=yellow, inner sep=0pt, minimum size=5pt},
                  edge/.style={draw=red, line width=2pt}
                ]
                    \foreach \x in {1,...,5}
                    \foreach \y in {1,2,3,5}
                    {
                        \node[vertex] (\x-\y) at (\x, \y) {};
                    }
                    \foreach \x in {1,3,5}
                    {
                        \node (\x-4) at (\x, 4.1) {\Huge $\vdots$};
                    }
                    \draw[decoration={brace, raise=15pt}, decorate] (1-5) to node[above=16pt] {$p$} (5-5);
                    \draw[edge] (1-1) to (2-2);
                    \begin{pgfonlayer}{bg}
                        \foreach \y in {1,...,3}
                        {
                            \node[draw, rounded rectangle, dashed, fit=(1-\y) (5-\y), inner sep=7pt, label=right:$V_\y$] {};
                        }
                        \node[draw, rounded rectangle, dashed, fit=(1-5) (5-5), inner sep=7pt, label=right:$V_{2c}$] {};
                    \end{pgfonlayer}
                \end{tikzpicture}
            }
            \caption{Select an edge.}
        \end{subfigure}
        \begin{subfigure}{0.3\textwidth}
            \resizebox{\textwidth}{!}{
                \begin{tikzpicture}[thick,
                  vertex/.style={draw, circle, thick, fill=yellow, inner sep=0pt, minimum size=5pt},
                  edge/.style={draw=red, line width=2pt}
                ]
                    \foreach \x in {1,...,5}
                    \foreach \y in {1,2,3,5}
                    {
                        \node[vertex] (\x-\y) at (\x, \y) {};
                    }
                    \foreach \x in {1,3,5}
                    {
                        \node (\x-4) at (\x, 4.1) {\Huge $\vdots$};
                    }
                    \draw[decoration={brace, raise=15pt}, decorate] (1-5) to node[above=16pt] {$p$} (5-5);
                    \draw[edge] (1-1) to (2-2);
                    \draw[edge] (2-2) to (3-3);
                    \draw[edge] (3-3) to (5-5);
                    \begin{pgfonlayer}{bg}
                        \foreach \y in {1,...,3}
                        {
                            \node[draw, rounded rectangle, dashed, fit=(1-\y) (5-\y), inner sep=7pt, label=right:$V_\y$] {};
                        }
                        \node[draw, rounded rectangle, dashed, fit=(1-5) (5-5), inner sep=7pt, label=right:$V_{2c}$] {};
                    \end{pgfonlayer}
                \end{tikzpicture}
            }
            \caption{Extend linearly to size $2c$.}
        \end{subfigure}
        \begin{subfigure}{0.3\textwidth}
            \resizebox{\textwidth}{!}{
                \begin{tikzpicture}[thick,
                  vertex/.style={draw, circle, thick, fill=yellow, inner sep=0pt, minimum size=5pt},
                  edge/.style={draw=gray, line width=1pt}
                ]
                    \foreach \x in {1,...,5}
                    \foreach \y in {1,2,3,5}
                    {
                        \node[vertex] (\x-\y) at (\x, \y) {};
                    }
                    \foreach \x in {1,3,5}
                    {
                        \node (\x-4) at (\x, 4.1) {\Huge $\vdots$};
                    }
                    \draw[decoration={brace, raise=15pt}, decorate] (1-5) to node[above=16pt] {$p$} (5-5);
                    \foreach \a in {1,...,5}
                    \foreach \b in {1,...,5}
                    {
                        \draw[edge] (\a-1) to (\b-2);
                    }
                    \begin{pgfonlayer}{bg}
                        \foreach \y in {1,...,3}
                        {
                            \node[draw, rounded rectangle, dashed, fit=(1-\y) (5-\y), inner sep=7pt, label=right:$V_\y$] {};
                        }
                        \node[draw, rounded rectangle, dashed, fit=(1-5) (5-5), inner sep=7pt, label=right:$V_{2c}$] {};
                    \end{pgfonlayer}
                \end{tikzpicture}
            }
            \caption{Can make $p^2$ such cliques.}
        \end{subfigure}
        \caption{Clique gadget construction in lemma~\ref{lem:erasure}.}\label{fig:clique-gadget-construction}
    \end{center}
\end{figure}
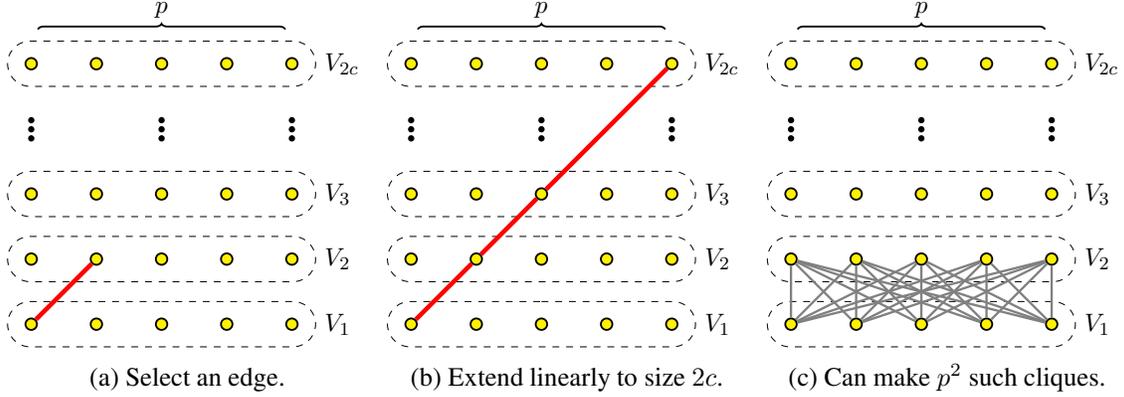

\begin{proof}
    We construct the sets from an erasure code with block size $2c$ and message size $2$. Choose a prime $p$ such that $\frac{n}{4c} \leq p \leq \frac{n}{2c}$ (which is guaranteed to exist). Now take $2c < p$ groups of vertices, each of size $p$. Label the groups $V_i$ (for $i \in [2c]$) and label the items in each group $V_i$ as $v^i_j$ (for $j \in [p]$). Leftover vertices are added to the final graph as isolated vertices.
    
    For each polynomial $\mathcal{P} \in \text{GF}(p^2)$ we define $K_\mathcal{P}$ to be the clique over vertices $\lbrace v^i_{\mathcal{P}(i)} | i \in [2c] \rbrace$. This can be viewed as taking each of the $p^2$ possible edges between $V_1$ and $V_2$ and extending them ``linearly'' to the other layers (see Figure~\ref{fig:clique-gadget-construction}).
    Clearly $\mathcal{K} = \lbrace K_\mathcal{P} | \mathcal{P} \in
    \text{GF}(p^2) \rbrace$ consists of $p^2 > \frac{n^2}{16c^2}$
    cliques, each of size $2c$.
We next show that they are pairwise edge-disjoint and that their union contains no larger cliques.
    
    Each clique contains exactly one vertex from each group $V_i$, so
    for two cliques to share an edge there must be distinct
    polynomials $\mathcal{P}, \mathcal{Q} \in GF(p^2)$ that have the
    same value at two different points: $\mathcal{P}(i) =
    \mathcal{Q}(i)$ and $\mathcal{P}(j) = \mathcal{Q}(j)$ for $i \neq
    j$  --- a contradiction.  
    Finally, because no clique contains a pair of vertices from a single $V_i$, their union can contain no internal edges on any $V_i$. So any clique can contain at most $1$ vertex from each $V_i$, giving a maximum size of $2c$.
    Hence, $\bigcup_{\mathcal{P} \in \text{GF}(p^2)} K_\mathcal{P}$ is a graph with the required properties.
\end{proof}

\begin{figure}[h]
    \begin{center}
        \begin{subfigure}{0.3\textwidth}
            \resizebox{\textwidth}{!}{
                \begin{tikzpicture}[thick,
                  vertex/.style={draw, circle, thick, fill=yellow, inner sep=0pt, minimum size=5pt},
                  edge/.style={draw=red, line width=2pt},
                  fuzzy/.style={line width=2pt, dashed}
                ]
                    \foreach \x in {1,...,5}
                    \foreach \y in {1,...,4}
                    {
                        \node[vertex] (\x-\y) at (\x, \y) {};
                    }
                    \draw[decoration={brace, raise=15pt}, decorate] (1-4) to node[above=16pt] {$p$} (5-4);
                    \draw[edge] (4-2) to (3-3);
                    \draw[fuzzy] (3-3) to (2-4);
                    \draw[fuzzy] (5-1) to (4-2);
                    \begin{pgfonlayer}{bg}
                        \foreach \y in {1,...,4}
                        {
                            \node[draw, rounded rectangle, dashed, fit=(1-\y) (5-\y), inner sep=7pt, label=right:$V_\y$] {};
                        }
                    \end{pgfonlayer}
                \end{tikzpicture}
            }
            \caption{No edge belongs to two different lines (cliques).}
        \end{subfigure}
        \begin{subfigure}{0.3\textwidth}
            \resizebox{\textwidth}{!}{
                \begin{tikzpicture}[thick,
                  vertex/.style={draw, circle, thick, fill=yellow, inner sep=0pt, minimum size=5pt},
                  edge/.style={line width=2pt}
                ]
                    \foreach \x in {1,...,5}
                    \foreach \y in {1,...,4}
                    {
                        \node[vertex] (\x-\y) at (\x, \y) {};
                    }
                    \draw[decoration={brace, raise=15pt}, decorate] (1-4) to node[above=16pt] {$p$} (5-4);
                    \draw[edge] (4-3) to node {\Huge\color{red} $\times$} (3-3);
                    \begin{pgfonlayer}{bg}
                        \foreach \y in {1,...,4}
                        {
                            \node[draw, rounded rectangle, dashed, fit=(1-\y) (5-\y), inner sep=7pt, label=right:$V_\y$] {};
                        }
                    \end{pgfonlayer}
                \end{tikzpicture}
            }
            \caption{No edges within layers, so no cliques larger than $2c$.}
        \end{subfigure}
        \caption{Clique gadget proof sketch for theorem~\ref{thm:explicit-vertex-anti}.}\label{fig:clique-gadget-proof}
    \end{center}
\end{figure}
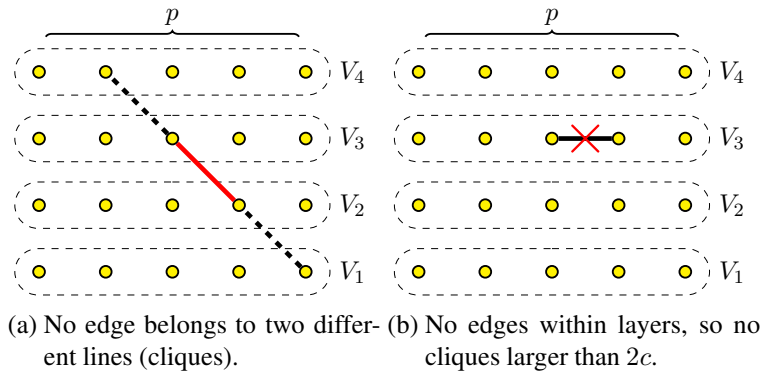

\begin{proof}[Proof of Theorem~\ref{thm:explicit-vertex-anti}]
    Suppose we have an algorithm $\mathcal{C}$ for explicit vertex streams which can, with probability at least $\frac{2}{3}$, produce a $c$-approximation to $\omega(G)$, the size of the largest clique. We will show that such an algorithm can be used to solve $\textsc{Chain}_{2c}$, by communicating its state $2c-1$ times.
    
Fix an instance of $\textsc{Chain}_{2c}$ with vectors of length $b = \frac{n^2}{64c^4}$.
Our lower bound implies any algorithm that can solve this must send at least one message of size $\Omega \left( \frac{b}{c^3} \right) = \Omega \left( \frac{n^2}{c^7} \right)$.
Take $n$ vertices and partition the nodes into $2c$ groups of size $\frac{n}{2c}$. Each group will be added to the stream by one of the parties.

\paragraph{Intra-party edges.}
First, consider the group of nodes associated with party $P_i$. We will encode the bits of $X^{(i)}$ onto the internal edges of this group using the construction from Lemma~\ref{lem:erasure}. The size $\frac{n}{c}$ sub-graph can fit $b$ cliques of size $2c$. We include the edges of clique $j$ if and only if $X^{(i)}_j = 1$. This is well-defined as the cliques are edge-disjoint. Label the clique in party $P_i$ corresponding to bit $j$ of $X^{(i)}$ as $\mathcal{K}^i_j$. The final party $P_{2c}$ has no associated vector. Instead, it constructs a single clique of size $2c$ and leaves the other vertices isolated.

\paragraph{Inter-party edges.}
We also need edges between the sub-graphs associated with
    different parties.
    Each party $P_i$ will connect all its vertices to some of the vertices belonging to previous parties ($P_j$ for $j< i$). These edges are considered to belong to party $P_i$, as they will be added by this party in the vertex streaming model.
    For each $j < i$ the party $P_i$ connects every one of its vertices to all of $\mathcal{K}^j_{\sigma_j}$ (the clique corresponding to index $\sigma_j$). For this to happen, $P_i$ must know all $\sigma_j$ for $j < i$. This information is not known initially, but can be appended to the communications between players with only $O(c)$ overhead.

\medskip    
    Now that we have our construction, we need to show bounds on $\omega(G)$ for the two cases. First, consider when every $X^{(i)}_{\sigma_i} = 1$. In this case we have each of the cliques $\mathcal{K}^i_{\sigma_i}$ present and connected together, forming a clique of size $4c^2$.  
    Now consider the case when every $X^{(i)}_{\sigma_i} = 0$. Consider a clique $\mathcal{K}$ in the graph. If $\mathcal{K}$ contains multiple vertices belonging to one party $P_i$, then it can contain none from any subsequent party $P_j$ ($j > i$), and at most one from each preceding party $P_l$ ($l < i$). Hence the size of any clique is bounded by $4c - 1$.
    To see why this holds, observe that for any $i < 2c$, our clique can contain only one vertex from $\mathcal{K}^i_{\sigma_i}$, as none of its edges are included in the graph. So to contain multiple vertices from party $P_i$, the clique $\mathcal{K}$ must contain a vertex $v$ from some $\mathcal{K}^i_j$ with $j \neq \sigma_i$. But then all subsequent parties $P_j$ ($j > i$) will have no vertices adjacent to $v$, so cannot contribute anything to $\mathcal{K}$. So the best we can do is include one vertex from each $\mathcal{K}^i_{\sigma_i}$ and then $2c$ from party $P_{2c}$ giving a clique of size $4c - 1$.
    
    To complete the proof, observe that this gap in clique sizes can be distinguished by a $c$-approximation algorithm, and any streaming algorithm gives a communication protocol by having each party update the algorithm state with their information and then passing it to the next player.
\end{proof}

\begin{figure}[h]
    \begin{center}
        \resizebox{\textwidth}{!}{
            \begin{tikzpicture}[
                vertex/.style={draw, circle, thick, fill=yellow, inner sep=0pt, minimum size=5pt},
                h/.style={draw=red, line width=2pt},
                o/.style={draw=orange, line width=1pt},
                g/.style={draw=black!40!green, line width=1pt},
                b/.style={draw=blue, line width=1pt},
                n/.style={draw=black, line width=2pt}
                ]
                \foreach \x in {1,...,5}
                \foreach \y in {1,...,4}
                \foreach \p in {1,...,3}
                {
                    \node[vertex] (\x-\y-\p) at (\x+6*\p, \y) {};
                }
                \foreach \y in {1,...,4}
                \foreach \p in {4}
                {
                    \node[vertex] (1-\y-\p) at (1+6*\p, \y) {};
                }
                \begin{pgfonlayer}{bg}
                    \foreach \y in {1,...,4}
                    \foreach \p in {1,...,3}
                    {
                        \node[draw, rounded rectangle, dashed, fit=(1-\y-\p) (5-\y-\p), inner sep=7pt] {};
                    }
                    \foreach \p in {1,...,3}
                    {
                        \node[draw, rounded corners=10pt, dashed, fit=(1-1-\p) (5-4-\p), inner sep=12pt, label=below:$P_\p$] (P-\p) {};
                    }
                    \foreach \p in {4}
                    {
                        \node[draw, rounded corners=10pt, dashed, fit=(1-1-\p) (1-4-\p), inner sep=12pt, label=below:$P_\p$] (P-\p) {};
                    }
                \end{pgfonlayer}
                \draw[h] (5-1-1) to (4-2-1);
                \draw[h] (4-2-1) to (3-3-1);
                \draw[h] (3-3-1) to (2-4-1);
                \draw[o] (5-1-1) to (P-2);
                \draw[o] (4-2-1) to (P-2);
                \draw[o] (3-3-1) to (P-2);
                \draw[o] (2-4-1) to (P-2);
                \draw[g, rounded corners=10pt] (P-3) |- (13, 5);
                \draw[g, in=180] (5-1-1) to (13, 5);
                \draw[g, in=180] (4-2-1) to (13, 5);
                \draw[g, in=180] (3-3-1) to (13, 5);
                \draw[g, in=180] (2-4-1) to (13, 5);
                \draw[b, rounded corners=10pt] (P-4) |- (11, 5.5);
                \draw[b, in=180] (5-1-1) to (11, 5.5);
                \draw[b, in=180] (4-2-1) to (11, 5.5);
                \draw[b, in=180] (3-3-1) to (11, 5.5);
                \draw[b, in=180] (2-4-1) to (11, 5.5);
                \draw[h] (3-1-2) to (3-2-2);
                \draw[h] (3-2-2) to (3-3-2);
                \draw[h] (3-3-2) to (3-4-2);
                \draw[g, out=0, in=180] (3-1-2) to (18, 5);
                \draw[g, out=0, in=180] (3-2-2) to (18, 5);
                \draw[g, out=0, in=180] (3-3-2) to (18, 5);
                \draw[g, out=0, in=180] (3-4-2) to (18, 5);
                \draw[b, in=180] (3-1-2) to (17, 5.5);
                \draw[b, in=180] (3-2-2) to (17, 5.5);
                \draw[b, in=180] (3-3-2) to (17, 5.5);
                \draw[b, in=180] (3-4-2) to (17, 5.5);
                \draw[h] (2-1-3) to (3-2-3);
                \draw[h] (3-2-3) to (4-3-3);
                \draw[h] (4-3-3) to (5-4-3);
                \draw[b, in=180] (2-1-3) to (24, 5.5);
                \draw[b, in=180] (3-2-3) to (24, 5.5);
                \draw[b, in=180] (4-3-3) to (24, 5.5);
                \draw[b, in=180] (5-4-3) to (24, 5.5);
                \draw[n] (1-1-4) to (1-2-4);
                \draw[n] (1-2-4) to (1-3-4);
                \draw[n] (1-3-4) to (1-4-4);
            \end{tikzpicture}
        }
        \caption{Example lower bound instance with $4$ players for theorem~\ref{thm:explicit-vertex-anti}. Cliques corresponding to $\sigma_1$, $\sigma_2$, and $\sigma_3$ are shown in bold red---other cliques are omitted.}\label{fig:clique-lower-bound}
    \end{center}
\end{figure}
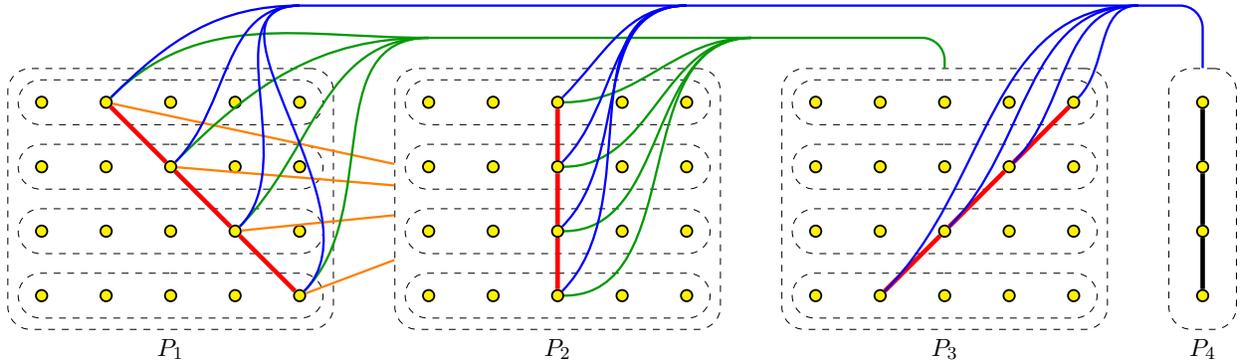

Interestingly, the same construction gives us hardness for approximating the chromatic number of a graph. This is notably not possible in the $2$-party edge stream construction in \cite{halldorsson2012streaming}, as the random graphs used as gadgets have large chromatic number with high probability (see \cite{bollobas1988chromatic}).

\begin{corollary}\label{thm:explicit-vertex-chromatic}
    Any explicit vertex streaming algorithm to find a $c$-approximation to $\chi(G)$ (the chromatic number), succeeding with probability at least $2/3$ requires $\Omega\left( \frac{n^2}{c^7} \right)$ space.
\end{corollary}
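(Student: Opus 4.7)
My plan is to reuse the graph construction from the proof of Theorem~\ref{thm:explicit-vertex-anti} unchanged and to show that the $\omega(G)$ gap between YES and NO instances is matched by an $\Omega(c)$ gap in $\chi(G)$. Since the streaming reduction from $\textsc{Chain}_{2c}$ is identical, the same per-message lower bound of $\Omega(n^2/c^7)$ transfers. The YES direction is immediate: the clique of size $4c^2$ exhibited in the YES instance by Theorem~\ref{thm:explicit-vertex-anti} forces $\chi(G) \geq \omega(G) \geq 4c^2$. The technical content is therefore an $O(c)$ upper bound on $\chi(G)$ in the NO instance.

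The key structural observation is that when every $X^{(i)}_{\sigma_i} = 0$, each set $\mathcal{K}^j_{\sigma_j}$ (for $j < 2c$) is in fact an \emph{independent} set in $G$. Any intra-party edge between two distinct vertices $u,v \in \mathcal{K}^j_{\sigma_j}$ would have to lie inside some clique $\mathcal{K}^j_\mathcal{P}$ with $X^{(j)}_\mathcal{P} = 1$; but $u$ and $v$ sit in different layers $V_l$, and by linear interpolation over $\mathrm{GF}(p)$ the unique polynomial of degree at most one through the relevant coordinates is $\sigma_j$ itself, while $X^{(j)}_{\sigma_j} = 0$ by hypothesis. This step relies precisely on the erasure-code property of Lemma~\ref{lem:erasure}, and is the step I expect to be the main obstacle; without the fact that two distinct lines over $\mathrm{GF}(p)$ agree on at most one point, other intra-party cliques could contribute unwanted edges inside $\mathcal{K}^j_{\sigma_j}$ and collapse the complete multipartite structure exploited below.

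Given this, I partition $V(G) = S \cup T$, where $S := K_{2c} \cup \bigcup_{j=1}^{2c-1}\mathcal{K}^j_{\sigma_j}$ with $K_{2c}$ the fixed $2c$-clique of the last party, and $T := V(G) \setminus S$. The induced subgraph $G[S]$ consists of one clique of size $2c$ joined (via the inter-party edges emanating from $P_{2c}$) to $2c - 1$ independent sets $\mathcal{K}^j_{\sigma_j}$ that are themselves pairwise fully interconnected through the inter-party construction; hence $\chi(G[S]) \leq 2c + (2c - 1) = 4c - 1$. The subgraph $G[T]$ has no inter-party edges whatsoever, because every inter-party edge in $G$ has an endpoint in some $\mathcal{K}^j_{\sigma_j} \subseteq S$; each piece $T \cap V(P_j)$ still inherits the layer structure of Lemma~\ref{lem:erasure}, so assigning each vertex the color of its layer yields a proper $(2c)$-coloring, and the same $2c$-element palette may be reused across parties. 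Thus $\chi(G[T]) \leq 2c$, and using disjoint palettes for $S$ and $T$ produces $\chi(G) \leq 6c - 1$ in the NO case.

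The resulting ratio $\chi_{\text{YES}}/\chi_{\text{NO}} \geq 4c^2/(6c - 1) = \Omega(c)$ is sufficient: after absorbing constants into the construction (for instance by scaling the number of parties and the clique size by a constant factor so that the ratio strictly exceeds $c$), any $c$-approximation to $\chi(G)$ distinguishes the YES and NO cases and therefore inherits the $\Omega(n^2/c^7)$ space lower bound from the communication lower bound on $\textsc{Chain}_{2c}$ exactly as in Theorem~\ref{thm:explicit-vertex-anti}.
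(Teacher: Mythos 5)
Your proof is correct, and it follows the same high-level strategy as the paper (reuse the Theorem~\ref{thm:explicit-vertex-anti} construction, bound $\chi(G)$ from below by $4c^2$ in the YES case via the large clique, and exhibit an $O(c)$ coloring in the NO case). The difference lies in how the NO-case coloring is organized. The paper colors each $\mathcal{K}^i_{\sigma_i}$ with a single dedicated color ($2c-1$ colors total), then observes that the \emph{remaining} vertices of each party — including the entire fixed clique $K_{2c}$ of $P_{2c}$ — are adjacent only to already-colored vertices from other parties, so the residual coloring can reuse one shared palette of $2c$ colors across all parties (each party's subgraph being $2c$-partite by the layer structure). That gives $4c-1$ colors. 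You instead pull $K_{2c}$ into $S$ alongside the independent sets $\mathcal{K}^j_{\sigma_j}$, spend $4c-1$ colors on $S$ and a disjoint $2c$-color palette on $T$, totalling $6c-1$. Both are $O(c)$, but your looser constant is not small enough to distinguish $4c^2$ from $c\cdot(6c-1)$ directly, which is why you then need the rescaling step (increasing the number of parties and clique size by a constant factor). That step is legitimate and preserves the $\Omega(n^2/c^7)$ asymptotics, but the paper avoids it entirely by noticing that $K_{2c}$ can share the residual palette: the paper's $4c^2/(4c-1) > c$ closes the gap with no rescaling needed. In short: correct, same essential argument, slightly suboptimal palette bookkeeping forcing an extra (but harmless) constant-factor adjustment that the paper's tighter coloring renders unnecessary.
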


\begin{proof}
    Consider the construction in the proof of Theorem~\ref{thm:explicit-vertex}. In the case of all $X^{(i)}_{\sigma_i} = 1$, the graph contains a clique of size $4c^2$, so it requires at least as many colours.
    
    Conversely, in the case of every $X^{(i)}_{\sigma_i} = 0$, we can construct a $4c$-colouring of the graph. First colour each of the nodes in each $\mathcal{K}^i_{\sigma_i}$ with the $i$\textsuperscript{th} colour (this is allowed, as they have no internal edges). Now, the remaining vertices in each party are not adjacent to any uncoloured vertices from other parties, so we simply need to be able to complete the colouring of each party in isolation with $2c$ new colours and we are finished. This is easily done, as each party's sub-graph is $2c$-partite by construction.
\end{proof}

\section{Maximum Independent Set in Geometric Intersection Graphs}

In this section we present a collection of  results around geometric
intersection graphs given as explicit or implicit vertex streams.
This represents a first study of how the difficulty of these problems differs between the models, and with other factors such as dimension.

Recall that a geometric intersection graph is a graph where nodes correspond to geometric objects, and edges indicate whether or not a particular pair of objects intersect. These graphs can be described implicitly, by just giving the collection of geometric objects, or explicitly as a collection of vertices and edges under the promise that some geometric representation exists.
For sufficiently complex geometry, every graph will have a geometric representation (simply take hyper-rectangles of a high enough dimension), so to define meaningful classes, we must limit the universe of objects.

We will consider the $l^p$ closed balls in $\mathbb{R}^d$, particularly for $p = 1$, $2$, and $\infty$. A $d$-dimensional $l^p$ ball is uniquely defined by its center and radius. The ball with center $c$ and radius $r$ is exactly the region $\lbrace x \in \mathbb{R}^d \,:\, \|x - c\|_p \leq r \rbrace$. For the implicit representation, we discretize the space of possible centers as $[M]^d$ and the space of possible radii as $[M]$. Then, the input stream is defined as follows:

\begin{definition}
    A $d$-dimensional $l^p$ ball stream consists of sequence of $n$ pairs in $[M]^d \times [M]$ indicating the center and radius of each ball respectively.
    
    This defines an implicit vertex stream, with the intersection function:
    \[\sigma\left( (p_1, w_1), (p_2, w_2) \right) =
    \begin{cases}
        1, & \text{if } \| p_1 - p_2 \|_p \leq w_1 + w_2\\
        0, & \text{otherwise.}
    \end{cases}
    \]
\end{definition}

Typically $M$ will be some polynomial in $n$, so that the balls can be
represented in $\text{polylog}(n)$ space.
We also define a dilation parameter, which will capture some of the difficulty of the problem.

\begin{definition}
    Let the dilation, $\Delta = \frac{r_{\text{max}}}{r_{\text{min}}}$, be the ratio between the largest and smallest radii in the stream.
\end{definition} 

We refer to 
the interesting special case of $\Delta=1$, when all the balls are
uniform in size, as a unit ball stream.

\subsection{Interval Graphs: $\mathbf{d=1}$}

In $d=1$, the choice of $p$ is irrelevant, as any $l^p$ ball is simply an interval.
As discussed in Section~\ref{sec:prior-work}, given an interval (ball) stream we can compute a $\frac{3}{2}$-approximation to MIS in $\tilde{O}(\alpha(G))$ space, and any better approximation requires $\Omega(n)$ space.
A natural question is how this compares with the space complexity for an interval intersection graph given as an explicit vertex stream.

\begin{theorem}\label{thm:explicit-interval}
    Any algorithm with constant error probability that returns a $(\frac{5}{3} - \epsilon)$-approximation of $\alpha(G)$ for a unit interval intersection graph given as an explicit vertex stream requires $\Omega(n)$ space.
\end{theorem}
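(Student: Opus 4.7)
The plan is to prove this lower bound via a reduction from the two-party \textsc{Index} communication problem (Theorem~\ref{thm:index}). Given an \textsc{Index} instance with Alice holding $X \in \{0,1\}^N$ and Bob holding an index $\sigma \in [N]$, I would construct a unit interval graph $G(X,\sigma)$ on $n = \Theta(N)$ vertices whose MIS size $\alpha(G)$ takes one of two values differing by a factor of $5/3 - \epsilon$, depending on $X_\sigma$. Any one-pass $(5/3-\epsilon)$-approximation algorithm on the explicit vertex stream of $G(X,\sigma)$ then yields a one-way protocol for \textsc{Index} whose communication cost equals the space of the algorithm, giving the desired $\Omega(N) = \Omega(n)$ lower bound.

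First I would build Alice's side of the stream. Alice lays out $N$ disjoint gadgets along the real line, each a small unit-interval sub-graph with two configurations: a ``dense'' $P_5$-type arrangement with MIS $= 3$ when $X_i = 0$, and a ``sparse'' arrangement of $5$ pairwise non-intersecting unit intervals with MIS $= 5$ when $X_i = 1$. The gadgets are placed at well-separated positions along the line so that Alice's sub-graph is literally the disjoint union of the $N$ gadget sub-graphs, and the $5{:}3$ ratio appears at the level of each gadget. This produces a valid unit interval graph regardless of $X$.

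Next, I would design Bob's portion of the stream. Bob's task is, given $\sigma$, to introduce a constant number of unit intervals interacting with each non-$\sigma$ gadget so as to \emph{pin} its contribution to $\alpha(G)$ to a value that is independent of $X_i$, while leaving the $\sigma$-th gadget untouched so that its MIS (either $3$ or $5$) propagates into $\alpha(G)$ undiluted. For this to preserve the $5/3$ ratio asymptotically, Bob's neutralizing vertices must themselves contribute only $O(1)$ to $\alpha(G)$ per gadget; a natural mechanism is to place them as a clique-like cluster per gadget, so that at most one vertex from each cluster enters any independent set, yet collectively they are adjacent to every gadget vertex in both configurations.

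The main obstacle will be reconciling the unit interval constraints with these neutralization requirements. A single unit interval interacts only with a bounded region of the line, while the sparse gadget (being $5$ pairwise non-intersecting unit intervals) spans a range of length greater than $4$, so no single Bob vertex can dominate it. Using several independent Bob dominators per gadget would introduce an additive MIS offset of order $N$ that would wash out the $5{:}3$ gap entirely, as a simple calculation shows. The technical crux is therefore to choose Bob's dominators so that they are pairwise adjacent (contributing $O(1)$ per gadget to $\alpha(G)$) \emph{and} jointly cover every Alice vertex of the gadget in both configurations, and to verify that the resulting global arrangement remains representable as unit intervals. Once this geometric balancing is achieved, $\alpha(G)$ equals a fixed constant plus the untouched contribution of gadget $\sigma$, giving the gap of $5/3-\epsilon$; a $(5/3-\epsilon)$-approximation distinguishes the cases, and the \textsc{Index} lower bound completes the proof.
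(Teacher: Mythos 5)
Your reduction target (two-party \textsc{Index}) and the $5{:}3$ gap are the right ingredients, but the per-bit gadget structure you propose cannot produce the required approximation ratio, and you already notice the symptom without resolving it. If Alice lays out $N$ spatially separated gadgets, any independent set picks up $\Theta(N)$ vertices across the non-$\sigma$ gadgets (at least one per gadget-plus-dominator cluster), so the total $\alpha(G)$ is $\Theta(N) + O(1)$ in both cases and the ratio tends to $1$, not $5/3$. Your proposed fix --- make Bob's dominators pairwise adjacent across gadgets so that the $N-1$ clusters jointly contribute only $O(1)$ --- is geometrically impossible in a unit interval graph: two unit intervals more than distance $2$ apart cannot intersect, so a clique cannot span dominators sitting over far-apart gadgets. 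This is not a ``technical crux'' that can be balanced away; it is a hard obstruction to the gadget-per-bit design.

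The paper's construction sidesteps this entirely by making the \emph{target independent set itself constant-sized}, rather than encoding each bit in a separate region of the line. Alice encodes $X$ in a partition of $2n$ vertices $\{a_i,b_i\}$ into three cliques $A_1$, $B_1$, and $A_0\cup B_0\cup\{x\}$ (the memberships of $a_i,b_i$ depending on $X_i$). Bob then streams two new vertices $y$ and $z$, with $y$ adjacent to every $a_i$ except $a_\sigma$ and $z$ adjacent to every $b_i$ except $b_\sigma$. When $X_\sigma=1$ the set $\{y, a_\sigma, x, b_\sigma, z\}$ is independent and $\alpha(G)=5$; when $X_\sigma=0$ one checks that $\alpha(G)=3$. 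Because this is an \emph{explicit} vertex stream, Bob may specify $y$'s and $z$'s adjacencies directly as edge lists to previously streamed vertices; the only geometric requirement is that the final graph admits some unit-interval realization (which it does --- three adjacent ``stacks'' with a few intervals shifted to meet $y$ and $z$), not that each party place literal intervals on the line. Your proposal treats the input more like an implicit interval stream and commits to a linear layout per bit, which is what forces the fatal $\Theta(N)$ additive term.
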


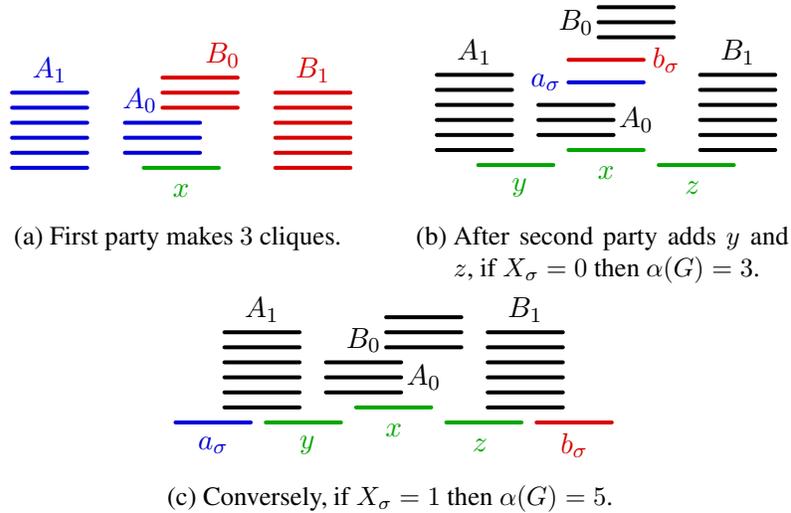
\begin{figure}[h]
    \centering
    \subcaptionbox{First party makes $3$ cliques.}[0.3\textwidth]{
        \begin{tikzpicture}[thick,
          edgeA/.style={blue!85!black, line cap=round, line width=1.5pt},
          edgeB/.style={red!85!black, line cap=round, line width=1.5pt},
          edgeC/.style={green!65!black, line cap=round, line width=1.5pt}
        ]
            \foreach \y in {1,...,6}
            {
                \draw[edgeA] (0,0.2*\y) to (1,0.2*\y);
                \draw[edgeB] (3.5,0.2*\y) to (4.5,0.2*\y);
            }
            \node[blue!85!black] (A-1) at (0.5, 1.5) {$A_1$};
            \node[red!85!black] (B-1) at (4, 1.5) {$B_1$};
            \foreach \y in {1,...,3}
            {
                \draw[edgeA] (1.5,0.2+0.2*\y) to (2.5,0.2+0.2*\y);
                \draw[edgeB] (2,0.8+0.2*\y) to (3,0.8+0.2*\y);
            }
            \node[blue!85!black] (A-0) at (1.7, 1.1) {$A_0$};
            \node[red!85!black] (B-0) at (2.8, 1.7) {$B_0$};
            \draw[edgeC] (1.75,0.2) to (2.75,0.2);
            \node[green!65!black] (x) at (2.25, -0.1) {$x$};
        \end{tikzpicture}
    }
    \hspace{0.5cm}
    \subcaptionbox{After second party adds $y$ and $z$, if $X_\sigma = 0$ then $\alpha(G)=3$.}[0.3\textwidth]{
        \begin{tikzpicture}[thick,
          edge/.style={line cap=round, line width=1.5pt},
          edgeA/.style={blue!85!black, line cap=round, line width=1.5pt},
          edgeB/.style={red!85!black, line cap=round, line width=1.5pt},
          edgeC/.style={green!65!black, line cap=round, line width=1.5pt}
        ]
            \foreach \y in {1,...,6}
            {
                \draw[edge] (0,0.2*\y) to (1,0.2*\y);
                \draw[edge] (3.5,0.2*\y) to (4.5,0.2*\y);
            }
            \node (A-1) at (0.5, 1.5) {$A_1$};
            \node (B-1) at (4, 1.5) {$B_1$};
            \foreach \y in {1,...,3}
            {
                \draw[edge] (1.35,0.2+0.2*\y) to (2.35,0.2+0.2*\y);
                \draw[edge] (2.15,1.5+0.2*\y) to (3.15,1.5+0.2*\y);
            }
            \node (A-0) at (2.65, 0.6) {$A_0$};
            \node (B-0) at (1.85, 1.9) {$B_0$};
            \draw[edgeC] (1.75,0.2) to (2.75,0.2);
            \draw[edgeC] (0.55,0) to (1.55,0);
            \draw[edgeC] (2.95,0) to (3.95,0);
            \node[green!65!black] (x) at (2.25, -0.1) {$x$};
            \node[green!65!black] (y) at (1.1, -0.3) {$y$};
            \node[green!65!black] (z) at (3.4, -0.3) {$z$};
            \draw[edgeA] (1.75,1.1) to (2.75,1.1);
            \draw[edgeB] (1.75,1.4) to (2.75,1.4);
            \node[blue!85!black] (a-s) at (1.45, 1.1) {$a_\sigma$};
            \node[red!85!black] (b-s) at (3.05, 1.4) {$b_\sigma$};
        \end{tikzpicture}
    }
    \subcaptionbox{Conversely, if $X_\sigma = 1$ then $\alpha(G)=5$.}[0.5\textwidth]{
        \begin{tikzpicture}[thick,
          edge/.style={line cap=round, line width=1.5pt},
          edgeA/.style={blue!85!black, line cap=round, line width=1.5pt},
          edgeB/.style={red!85!black, line cap=round, line width=1.5pt},
          edgeC/.style={green!65!black, line cap=round, line width=1.5pt}
        ]
            \foreach \y in {1,...,6}
            {
                \draw[edge] (0, 0.2*\y) to (1, 0.2*\y);
                \draw[edge] (3.5, 0.2*\y) to (4.5, 0.2*\y);
            }
            \node (A-1) at (0.5, 1.5) {$A_1$};
            \node (B-1) at (4, 1.5) {$B_1$};
            \foreach \y in {1,...,3}
            {
                \draw[edge] (1.35, 0.2+0.2*\y) to (2.35, 0.2+0.2*\y);
                \draw[edge] (2.15, 0.8+0.2*\y) to (3.15, 0.8+0.2*\y);
            }
            \node (A-0) at (2.65, 0.6) {$A_0$};
            \node (B-0) at (1.85, 1.1) {$B_0$};
            \draw[edgeC] (1.75, 0.2) to (2.75, 0.2);
            \draw[edgeC] (0.55, 0) to (1.55, 0);
            \draw[edgeC] (2.95, 0) to (3.95, 0);
            \node[green!65!black] (x) at (2.25, -0.1) {$x$};
            \node[green!65!black] (y) at (1.1, -0.3) {$y$};
            \node[green!65!black] (z) at (3.4, -0.3) {$z$};
            \draw[edgeA] (-0.65, 0) to (0.35, 0);
            \draw[edgeB] (4.15, 0) to (5.15, 0);
            \node[blue!85!black] (a-s) at (-0.15, -0.3) {$a_\sigma$};
            \node[red!85!black] (b-s) at (4.65, -0.3) {$b_\sigma$};
        \end{tikzpicture}
    }
    \caption{Interval representations for the construction in theorem~\ref{thm:explicit-interval}. Horizontal positioning represents the location of the intervals in $\mathbb{R}$, vertical positioning is for clarity only.}
    \label{fig:explicit-interval}
\end{figure}

\begin{proof}
    We will show this bound by a reduction from the $2$-party \textsc{Index} communication problem for an $n$-bit vector.

    Consider an instance of \textsc{Index} with bit vector $X \in \lbrace 0, 1 \rbrace^n$ and index to be queried $\sigma \in [n]$. We will construct a $2n+3$ vertex graph as an explicit vertex stream.

    Label the vertices $x$, $y$, $z$ and $a_i$, $b_i$ for $i \in [n]$. Split the $a_i$'s into two sets based on the bit vector $X$: $A_1 = \lbrace a_i \rbrace_{X_i = 1}$ and $A_0 = \lbrace a_i \rbrace_{X_i = 0}$. Similarly let $B_1 = \lbrace b_i \rbrace_{X_i = 1}$ and $B_0 = \lbrace b_i \rbrace_{X_i = 0}$.
    Now the first party creates the following subgraph in the stream: a clique consisting of all the vertices in $A_1$, a second clique made from $B_1$, and a third clique containing $A_0 \cup B_0 \cup \lbrace x \rbrace$.

    So far this represents a valid interval graph, which can be interpreted as three adjacent ``stacks'' of intervals. Now, the second player adds $y$ with edges to every $a_i$ except $a_\sigma$ and then adds $z$ with edges to every $b_i$ except $b_\sigma$. This can still be viewed as a valid interval graph, but we now require some intervals from each stack to be ``shifted'' to overlap with the two new intervals.

    In the case of $X_\sigma = 0$, the resulting graph has $\alpha(G) = 3$. Otherwise, $\alpha(G) = 5$. Hence, any algorithm giving a better than $\frac{5}{3}$ (one-sided) approximation factor could distinguish them and solve \textsc{Index}.
\end{proof}

This shows that MIS for interval graphs is strictly more difficult in explicit vertex streams than implicit ones.

\subsection{Square Graphs: $\mathbf{d=2}$}

Observe that in $d=2$, the geometry defined by $l^1$ and $l^\infty$
balls is equivalent, after fixed scaling and rotation.
Thus any instance in one geometry can be transformed into an
equivalent (intersection-wise) instance of the other, after ensuring
that the discretization is not too coarse.

Our first result for $2$ dimensions is a $3$-approximation algorithm
for MIS on a unit square stream.
This is a generalization the algorithm of \cite{cabello2017interval} for unit interval streams --- we perform a decomposition of the plane into $2$-by-$3$ strips, similar to their decomposition of the line into length $3$ segments.

\begin{theorem}\label{thm:unit-squares}
For $d=2$ and $\Delta=1$ we can $3$-approximate MIS for an $l_1$ or $l_\infty$ ball stream using $\tilde{O}(\alpha(G))$ space.
\end{theorem}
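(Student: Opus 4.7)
The plan is to prove the theorem constructively, by designing a streaming algorithm. Since $l^1$ and $l^\infty$ balls in $\mathbb{R}^2$ are equivalent up to a $45^\circ$ rotation, we handle the $l^\infty$ case (axis-aligned unit squares) and reduce the $l^1$ case to it. Rescale so that each unit ball is an axis-aligned square of side $1$.

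The algorithm decomposes $\mathbb{R}^2$ into a grid of axis-aligned rectangular cells of dimensions $2 \times 3$, and for each cell $C$ maintains an exact MIS $L_C$ among those unit squares whose centers have arrived in $C$. Since each cell has bounded area, $|L_C|$ is bounded by a constant (at most $6$ by an area-packing argument), and $L_C$ can be maintained in $O(1)$ time per arrival using a canonical representation of the constant number of \emph{extreme} candidate squares per cell, in direct analogy with the interval-endpoint bookkeeping in the $1$D algorithm of \cite{cabello2017interval}.

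To produce the output, we first observe the inequality $\sum_C |L_C| \ge \alpha(G)$: if $I^*$ is any maximum independent set, then $I^* \cap C$ is an independent set of unit squares with centers in $C$, so $|I^* \cap C| \le |L_C|$; summing gives $\alpha(G) = |I^*| = \sum_C |I^* \cap C| \le \sum_C |L_C|$. Next, we partition the (non-empty) cells into $3$ families such that, for each family $\mathcal{F}$, the union $\bigcup_{C \in \mathcal{F}} L_C$ is a globally independent set of $G$. The output is this union for the family maximizing $\sum_{C \in \mathcal{F}} |L_C|$; by pigeonhole it has size at least $\tfrac{1}{3} \sum_C |L_C| \ge \alpha(G)/3$, which is the claimed $3$-approximation. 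The same $3$-family partition also bounds the total storage: each family's $L_C$-union has size at most $\alpha(G)$, so $\sum_C |L_C| \le 3\alpha(G)$, giving $\tilde{O}(\alpha(G))$ space overall.

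The main obstacle is constructing the $3$-family partition with the required geometric separation. The naive colouring of the cell grid by $(i+j) \bmod 3$ does not suffice, because the king-adjacency graph of the integer grid has chromatic number $4$: two same-coloured cells can share a corner, leaving room for intersecting unit squares near that corner. The plan is to circumvent this by exploiting the $2 \times 3$ aspect ratio together with the $\tfrac{1}{2}$-unit extension of unit squares beyond their cells. Specifically, a brick-pattern decomposition (alternate rows shifted by half a cell width) yields a cell adjacency graph that is $3$-chromatic, and a routine geometric case analysis on the six neighbours of each brick cell confirms that same-coloured cells are separated by more than the $l^\infty$-diameter of a unit square. Establishing these separation inequalities cleanly, and writing out the constant-space update rule for $L_C$ in full detail, are the principal technical tasks that remain.
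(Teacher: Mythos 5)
Your decomposition is a genuinely different route from the paper's. The paper partitions the plane into $3w\times 2w$ rectangles, considers six copies of the partition shifted by $\{0,w,2w\}\times\{0,w\}$, and restricts attention in each copy to squares \emph{fully contained} in a cell. Full containment shrinks the feasible center region to $2w\times w$, so at most two disjoint unit squares fit per cell and the per-cell MIS is trivially maintained by leftmost and rightmost. The observation that each square is fully contained in a cell of exactly two of the six copies then gives $\sum \alpha(G_{x,y}) \ge 2\alpha(G)$, hence a $3$-approximation by taking the best copy. You instead keep a \emph{single} grid, assign each square to the cell containing its center, and seek a $3$-coloring of cells so that same-colored cells are $\ell_\infty$-separated by more than $w$; the pigeonhole step $\sum_C |L_C| \ge \alpha(G)$ and the partition into three globally-independent families is a clean alternative to the shifted-copies argument, and your diagnosis that the regular grid's king-adjacency forces $4$ colors (whereas a half-cell-shifted brick layout is $3$-colorable) is correct.

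The gap is in the per-cell step, not the coloring. With $2\times 3$ cells and the centers-in-cell rule, a cell can hold up to six pairwise-disjoint unit squares, and it is not true that the cell-MIS can be recovered from a constant set of ``extreme'' squares. For example, take the four near-corner squares of a $2\times 3$ cell together with two interior squares near $(0.1,1.5)$ and $(1.5,1.5)$: all six are pairwise $\ell_\infty$-separated by more than $1$, but $(1.5,1.5)$ is not extremal in any axis or diagonal direction, and a stream that also contains many squares overlapping it gives no canonical rule for which interior square to retain. The $1$D endpoint-bookkeeping analogy breaks precisely because, once the center region has both dimensions exceeding $1$, separation can be achieved in either coordinate and the optimal choice is a genuine $2$D packing problem, not a greedy sweep. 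The paper avoids this by using full containment to shrink the effective region so that only two squares fit. Your approach can be salvaged in the same spirit without abandoning the single-grid coloring idea: use $2w\times w$ bricks (height less than $w$ forces all separation into the $x$-axis, so the cell-MIS is at most $2$ and leftmost/rightmost suffices, exactly as in the paper's strip), and check --- which does hold with the half-open convention --- that the brick $3$-coloring still yields $>w$ separation between same-colored cells in adjacent rows (horizontal gap of one cell-width minus the shift) and in rows two apart (vertical gap of one cell-height). Without shrinking the cells in this way, the claim that $L_C$ can be maintained exactly in $O(1)$ space remains unproved and is, in my view, the missing core of the argument.
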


\begin{proof}
Let $r$ be the radius of the balls, and let $w = 2r$. Without loss of generality, we consider the $l_\infty$ version of the problem.
    
First we look at the problem restricted to squares in the half
open strip $[0,3w) \times [0,2w)$, referring to the first axis as
``left-right'', and the second as ``up-down''.
At most $2$ non-overlapping closed unit squares can fit fully within
the open region, 
and for them not to overlap one must be left of the other.
Hence, by storing the leftmost and rightmost squares seen within the strip, we can determine exactly an MIS in the region.
    
Now, consider the whole of $[M]^2$. We partition this up into
$3w$-by-$2w$ half-open strips and consider only the squares which
fall exactly within one of the strips.
By solving MIS within each strip as before and taking the union, we can solve exactly MIS on this substream. This requires us to store at most twice as many squares as the solution.
    
    Finally, we consider $6$ different copies of the partitioning shifted by $0$, $w$, or $2w$ horizontally and $0$ or $w$ vertically. Any square from the stream must be fully contained in a strip in at least $2$ of the $6$ partitionings. In particular, this holds for every square in a fixed MIS of the full stream. Using $G_{x,y}$ for the graph of the substream of squares that fit exactly in the partitioning shifted by $(x, y)$, we know the following: $\sum\limits_{x=0,w,2w} \left( \sum\limits_{y=0,w} \left( \alpha(G_{x,y}) \right)\right) \geq 2\alpha(G)$. That is, the sum of the sizes of the substream MIS's for the $6$ partitionings is at least $2$ times the size of the true MIS. Therefore $\max\limits_{x=0,w,2w} \left( \max\limits_{y=0,w} \left( \alpha(G_{x,y}) \right)\right) \geq \frac{1}{3}\alpha(G)$
    
    Therefore, we simply take the largest of the $6$ substream MIS's and this must give a $3$-approximation. In total, we must store at most $12$ squares per MIS square.
\end{proof}

As in \cite{cabello2017interval} for unit intervals, this immediately leads to a sublinear space algorithm for estimating $\alpha(G)$ with only a $(1+\epsilon)$ factor loss in approximation factor, through a combination of distinct elements and sampling.

\begin{corollary}
For $d=2$ and $\Delta=1$ we can $(3 + \epsilon)$-approximate $\alpha(G)$ with constant probability for an $l_1$ or $l_\infty$ ball stream using $O(\epsilon^{-2}\log{\epsilon^{-1}}+\log{n})$ space.
\end{corollary}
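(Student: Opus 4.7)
The plan is to convert the $\tilde O(\alpha(G))$-space algorithm of Theorem~\ref{thm:unit-squares} into a polylog-space estimator via distinct-elements ($F_0$) sketching, mirroring the approach of \cite{cabello2017interval} for 1D unit intervals. The resulting algorithm would separately estimate $\alpha(G_{x,y})$ for each of the six partitionings $(x,y) \in \{0,w,2w\} \times \{0,w\}$ used in the proof of Theorem~\ref{thm:unit-squares} and return the maximum; since that maximum lies in $[\alpha(G)/3,\alpha(G)]$, any $(1\pm\epsilon)$-accurate estimate yields the required $(3+\epsilon)$-approximation of $\alpha(G)$ after rescaling $\epsilon$ by a constant.

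For a single partitioning, I would decompose $\alpha(G_{x,y}) = N_{\geq 1} + N_{=2}$, where $N_{\geq 1}$ is the number of non-empty $3w\times 2w$ strips and $N_{=2}$ is the number of strips containing a non-overlapping pair of squares. The quantity $N_{\geq 1}$ is directly an $F_0$ query on the stream of strip-IDs (a square is fed to the sketch only when it lies fully within some strip of the current partitioning), so a standard $F_0$ sketch returns a $(1\pm\epsilon)$-estimate using $O(\epsilon^{-2}\log\epsilon^{-1} + \log n)$ bits with constant probability.

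For $N_{=2}$, I would exploit the fact that two unit squares in a $3w\times 2w$ strip are non-overlapping iff their centers differ in the $x$-coordinate by at least $w$ (the strip's height of $2w$ rules out non-overlap in the $y$-direction). I would discretize each strip's $x$-range into $\Theta(\epsilon^{-1})$ sub-columns of width $\Theta(\epsilon w)$ and maintain a pair of further $F_0$ sketches over (strip, sub-column) identifiers: one restricted to sub-columns in a designated ``left'' range, and one restricted to a ``right'' range separated from the left by a gap of $\lceil 1/\epsilon\rceil$ sub-columns. By inclusion-exclusion, the count of strips possessing both a left- and a right-occupied sub-column can be read off from these sketches together with the estimate of $N_{\geq 1}$.

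The principal obstacle is handling the accuracy loss in the discretization for $N_{=2}$: strips whose MIS-witnessing pair has $x$-distance only slightly above $w$ may fail the discretized ``left + gap + right'' test and be undercounted. I expect this loss to be bounded by $O(\epsilon\alpha(G_{x,y}))$ per instantiation and thus absorbable into the target multiplicative $(1+\epsilon)$ factor, possibly aided by a random shift of the bin alignment to spread the boundary error uniformly. Once this is in place, the only remaining work is to aggregate the $F_0$ estimates across the six partitionings, apply a union bound over the constant number of sketch instances, and verify that the total space matches the stated $O(\epsilon^{-2}\log\epsilon^{-1} + \log n)$.
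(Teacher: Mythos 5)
Your framework matches the paper's up to notation: estimate $\alpha(G_{x,y})$ for each of the six partitionings and take the maximum, and the decomposition $\alpha(G_{x,y}) = N_{\geq 1} + N_{=2}$ is identical to the paper's $\gamma\delta$ (each non-empty strip contributes $1$ plus an extra $1$ if it holds a disjoint pair). The $F_0$ sketch for $N_{\geq 1}$ is also the paper's approach. However, your plan for $N_{=2}$ does not work, and the obstacle is more severe than the boundary-error you flag. A single fixed ``left range $+$ gap $+$ right range'' test detects a strip only when it has one square with $x$-center in the left window and another with $x$-center in the right window. A strip whose witnessing pair lies entirely in, say, the middle of the $2w$-wide range (e.g.\ centers at $1.1w$ and $2.2w$ relative to the strip, with the fixed gap at $[w,2w)$) is never detected, no matter how far beyond $w$ the pair is separated. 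Such strips can constitute \emph{all} of $N_{=2}$, so the undercount is $\Omega(1)\cdot N_{=2}$, not $O(\epsilon\,\alpha(G_{x,y}))$. A random shift of the sub-column alignment does not rescue this: for a fixed pair, the probability that the shifted left/right windows bracket it is bounded away from $1$ (and can be $\Theta(\epsilon)$ for pairs deep in the interior), so the expected count remains biased by a constant factor. There is also a secondary issue: inclusion-exclusion on $(1\pm\epsilon)$ $F_0$ estimates of $|L|$, $|R|$, $|L\cup R|$ is only additively accurate; that happens to be tolerable here because the error is at most $O(\epsilon N_{\geq 1}) = O(\epsilon\,\alpha(G_{x,y}))$, but it is worth spelling out.

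The paper's proof avoids this entirely by estimating $\delta$, the average number of disjoint squares per \emph{non-empty} strip, via a nearly-uniform sample of the non-empty strips (obtained with nearly-uniform permutations, exactly as in \cite[Lemma 16]{cabello2017interval}). For each sampled strip one stores the leftmost and rightmost square, which exactly determines whether that strip contributes $1$ or $2$. Since $\delta\in[1,2]$, an additive $O(\epsilon)$ estimate of the sample average is a $(1\pm O(\epsilon))$-multiplicative estimate of $\delta$, and multiplying by the $F_0$ estimate of $\gamma$ gives the claimed approximation in the stated space. If you want to keep your $N_{\geq 1}+N_{=2}$ phrasing, you should replace the fixed-gap sketch with this sampling step; the $F_0$/subcolumn machinery is not a substitute for it.
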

\begin{proof}
    Observe that if we can get a $(1+\epsilon)$-approximation to each $\alpha(G_{x,y})$ from the proof of theorem~\ref{thm:unit-squares}, then we are done.
    
    Each strip can have $0$, $1$, or $2$ disjoint squares in it. To approximate $\alpha(G_{x,y})$ we estimate $\gamma$, the number of strips of a given partitioning which are non-empty, and $\delta$, the average number of disjoint squares in the non-empty strips. Then $\alpha(G_{x,y}) \approx \gamma\delta$.
    
    Observe that approximating $\gamma$ is a distinct elements problem, which we can $(1+\epsilon)$-approximate with constant probability in $O(\epsilon^{-2} + \log{n})$ space (see \cite{kane2010optimal}).
    
    Then $\delta$ can be approximated by using nearly-uniform permutations to keep a nearly-uniform sample of the non-empty strips. For full details see \cite[lemma 16]{cabello2017interval}.
\end{proof}

One might speculate whether this decomposition approach could afford a
better approximation factor based on some different partioning of the
place.
We give evidence for the negative, since 
any larger strips result in the fixed-size sub-problems not being
solvable exactly, as the following result shows.

\begin{theorem}
	Given a stream of $w$-by-$w$ squares (2 dimensional $l_\infty$ balls) contained in a $(2 + \delta)w$-by-$(2 + \delta)w$ region, achieving a $(\frac{3}{2}-\epsilon)$-approximation to $\alpha(G)$, with constant probability of success, for any $\epsilon, \delta > 0$ requires $\Omega(n)$ space.
\end{theorem}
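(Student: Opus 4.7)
I would prove this lower bound by reduction from the two-party \textsc{Index} problem on $N$-bit inputs, whose one-way randomized communication complexity is $\Omega(N)$ (Theorem~\ref{thm:index}). Fix $N = \Theta(n)$. Given an instance $(X, \sigma)$ of \textsc{Index}, Alice constructs the first $\Theta(n)$ unit squares of the stream, each placed inside the specified $(2+\delta)w \times (2+\delta)w$ region so as to encode $X$, runs the purported $(3/2 - \epsilon)$-approximation algorithm on this prefix, and transmits its memory state to Bob. Bob then appends $O(1)$ further unit squares to the stream based on $\sigma$, continues the algorithm, and reads off its approximation of $\alpha(G)$.

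The core of the reduction is a geometric gadget for which $\alpha(G) = 3$ when $X_\sigma = 1$ and $\alpha(G) = 2$ when $X_\sigma = 0$. Any $(3/2 - \epsilon)$-approximation algorithm must return a value strictly greater than $2$ in the first case and at most $2$ in the second, so such an algorithm decides $X_\sigma$; this forces its memory, and hence the streaming space, to be $\Omega(N) = \Omega(n)$. To build the gadget I would exploit the $\delta$-slack above the base $2w \times 2w$ layout in which only a single pair of disjoint unit squares fits comfortably. Concretely, Alice's $N$ encoding squares are packed into a thin central subregion where they pairwise intersect (and thus contribute at most one vertex to any independent set), with each square's exact position within that subregion determined by the corresponding bit of $X$. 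Bob's $O(1)$ query squares are pinned near the corners of the $(2+\delta)w \times (2+\delta)w$ region at locations that depend on $\sigma$, arranged so that the full graph admits a third independent square extending Bob's independent pair exactly when Alice's $\sigma$-th square has been placed in the favoured position.

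The main obstacle is designing and verifying this geometric gadget. Because the region is so cramped that $\alpha(G) \leq 4$ always, the entire $\alpha = 3$ versus $\alpha = 2$ separation must be driven by a single bit extracted from an $N$-bit encoding, and the construction must simultaneously guarantee: (i) Alice's $N$ encoding squares never accidentally admit an independent set of size greater than $2$ on their own, (ii) Bob's $O(1)$ query squares interact with the encoding in a way that depends only on $X_\sigma$ and not on any other bit of $X$, and (iii) the $\delta$-slack is just sufficient to make the relevant pair of squares either strictly disjoint or strictly intersecting according to $X_\sigma$. All three conditions must hold for every input $X$, so checking them case-by-case using the tight $l_\infty$ or $l_1$ intersection arithmetic will be the delicate part of the argument; once these are established, the communication lower bound for \textsc{Index} immediately yields the claimed $\Omega(n)$ space bound.
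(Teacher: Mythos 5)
Your high-level plan matches the paper's: reduce from two-party \textsc{Index}, have Alice encode $\Theta(n)$ bits with $\Theta(n)$ pairwise-intersecting squares, have Bob append $O(1)$ query squares determined by $\sigma$, and obtain a gap of $\alpha(G)=3$ versus $\alpha(G)=2$ that any $(\frac{3}{2}-\epsilon)$-approximation must separate. The three conditions you list as requirements on the gadget are exactly the right ones. However, the proof as written stops at ``here is what the gadget must do''; you explicitly defer the construction itself as ``the main obstacle'' and ``the delicate part,'' and never produce it. For this theorem the gadget \emph{is} the proof — the reduction template from \textsc{Index} is routine — so what you have is a proof sketch with its central step missing, not a complete argument.

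There is also a design choice in your sketch that makes the missing step harder than it needs to be. You propose that Alice places all $N$ squares, with each square's \emph{position} encoding the bit $X_i$. That forces you to simultaneously guarantee that Bob's query squares have a $\sigma$-sensitive interaction with Alice's $\sigma$-th square \emph{and} an $X_i$-insensitive interaction with every other $i \neq \sigma$, even as those squares move around with $X_i$. The paper sidesteps this entirely with a presence/absence encoding: with $w=\frac{4n}{\delta}$, Alice places a square centered at $(\frac{2n}{\delta}+2i,\ \frac{2n}{\delta}+2n+2-2i)$ only when $X_i=1$, so her squares lie along a short diagonal and pairwise intersect (clique). Bob appends two squares at $(\frac{6n}{\delta}+2\sigma+1,\ \frac{2n}{\delta}+2n+2-2\sigma)$ and $(\frac{2n}{\delta}+2\sigma,\ \frac{6n}{\delta}+2n+3-2\sigma)$. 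By construction Bob's first square meets every Alice square with index $>\sigma$, Bob's second meets every Alice square with index $<\sigma$, the two Bob squares are disjoint, and only the square at index $\sigma$ (if present) is disjoint from both — so $\alpha=3$ iff $X_\sigma=1$ and $\alpha=2$ otherwise, with all squares fitting inside $[0,(2+\delta)w]^2$. Condition (ii) becomes automatic: a bit that is $0$ contributes no square, so it cannot interact with Bob at all. I'd suggest replacing your ``position encodes bit'' scheme with this presence/absence scheme and then actually carrying out the $l_\infty$ distance checks; as it stands the proof is incomplete.
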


\begin{proof}
We show this by a reduction from $2$-party \textsc{Index}.
Fix an instance with bit vector $X \in \lbrace 0,1 \rbrace^n$ and query index $\sigma \in [n]$.
For the lower bound, we use squares of width $w = \frac{4n}{\delta}$, which meets the requirements for an implicit vertex stream as long as $\delta$ is constant.
    
Party one constructs the following collection of squares arranged
along a diagonal line: for each $i \in [n]$ with $X_i = 1$ include the square centered on $(\frac{2n}{\delta} + 2i, \frac{2n}{\delta} + 2n + 2 - 2i)$.
Now, observe that the parties could use a $(\frac{3}{2} - \epsilon)$-approximation streaming algorithm to allow party two to determine $X_\sigma$. Simply have party one run the algorithm on its collection, then pass the state to party two and append squares centered on $(\frac{6n}{\delta} + 2\sigma + 1, \frac{2n}{\delta} + 2n + 2 - 2\sigma)$ and $(\frac{2n}{\delta} + 2\sigma, \frac{6n}{\delta} + 2n + 3 - 2\sigma)$. If $X_\sigma = 1$, there exists a square in the original collection sandwiched between the two new squares giving $\alpha(G) = 3$. Otherwise, $\alpha(G) = 2$.
\end{proof}

Our next result for two dimensions is a stronger lower bound for approximating $\alpha(G)$ of a stream of unit squares in an unrestricted region, based on a reduction from the chained index communication problem used in our main result in Section~\ref{sec:general-explicit}.

\begin{theorem}\label{thm:unit-squares-lb}
Achieving a $(\frac{5}{2}-\epsilon)$-approximation of $\alpha(G)$, with constant probability of success, on an $l_1$ or $l_\infty$ ball stream with $d=2$ and $\Delta=1$ requires $\Omega(n)$ space for any $\epsilon > 0$.
\end{theorem}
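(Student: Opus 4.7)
The plan is to reduce from the $\textsc{Chain}_k$ communication problem of Section~\ref{sec:general-explicit}, whose $\Omega(n/k^2)$ lower bound will translate into the desired $\Omega(n)$ space bound once we fix $k$ to be a constant depending only on $\epsilon$. Each party translates its input into a collection of unit squares placed in a private horizontal strip of the plane, and the vertex arrival order in the stream corresponds directly to the one-way communication order among the $k$ parties. The squares live on a $\operatorname{poly}(n)$-sized integer grid, so each can be represented in $\operatorname{polylog}(n)$ bits and the stream is a valid implicit $l^\infty$ (equivalently, up to rotation/scaling, $l^1$) ball stream.

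The geometric construction encodes each bit $X^{(i)}_j$ as a small ``bit gadget'' placed in column $j$ of strip $i$. A 1-bit gadget exposes an additional disjoint unit square that can be added to an independent set, while a 0-bit gadget contributes one fewer. Party $P_{i+1}$ then uses its knowledge of $\sigma_i$ to place ``connector'' unit squares lying across column $\sigma_i$ of strip $i$ so that they overlap the extra square of the $\sigma_i$-gadget if and only if that bit is a $1$. Knowledge of earlier $\sigma$ values is carried forward in the communicated state with only $O(k \log n)$ bits of overhead, which is negligible. In this way, the geometry directly simulates the propagation of the answer bit along the chain.

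A careful tally should show that in the YES case (every $X^{(i)}_{\sigma_i} = 1$), the $\sigma_i$-gadgets cooperate with their connectors to admit an independent set of size $5k + O(1)$, whereas in the NO case the connectors block the extra squares in each $\sigma_i$-gadget and force any independent set to have size at most $2k + O(1)$. Since $(5k + O(1))/(2k + O(1)) \to \tfrac{5}{2}$ as $k$ grows, for every $\epsilon > 0$ there is a constant $k = k(\epsilon)$ for which a $(\tfrac{5}{2} - \epsilon)$-approximation would distinguish the two cases and thereby solve $\textsc{Chain}_k$. Combined with the communication lower bound of $\Omega(n/k^2)$, this yields the claimed $\Omega(n)$ space bound.

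The main obstacle is the rigidity of unit $l^1/l^\infty$ ball intersection: any clique of unit squares must share a common point (Helly's theorem for axis-aligned boxes), so bit gadgets cannot use arbitrary clique sizes, and each connector can only overlap squares whose centers lie in a thin $l^\infty$-neighborhood of the connector's center. Designing a gadget/connector pair that realizes the $+3$-per-active-chain-link contribution while remaining \emph{locally} geometric---and verifying that gadgets in different strips or different columns do not accidentally admit extra independent squares---is the delicate case analysis that makes the argument work. This is analogous in spirit to the one-dimensional construction of Theorem~\ref{thm:explicit-interval}, but the added second dimension, together with the need to compose cleanly over $k$ chain steps, is what allows the improved $\tfrac{5}{2}$ ratio over the $\tfrac{5}{3}$ ratio obtained there.
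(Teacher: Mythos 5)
There is a genuine gap here, and also a structural mismatch with the paper's actual argument. The paper does \emph{not} chain $k = k(\epsilon)$ parties; it reduces from $\textsc{Chain}_3$ --- exactly three parties --- and uses a separate \emph{repetition parameter} $k$ inside the geometric construction. Party~$1$ lays down two horizontal rows of balls, repeated $k$ times along the line, encoding $X^{(1)}$; party~$2$ inserts $k$ columns of balls \emph{perpendicular} to those rows, threading them through the gap left at position $\sigma_1$, encoding $X^{(2)}$; party~$3$ caps the top and bottom of each column at height determined by $\sigma_2$. The YES vs.\ NO gap is $5k$ vs.\ $2k+2$, and $k$ is then taken to be a large constant. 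So the $5/2$ ratio comes from repetition inside a fixed three-party gadget, not from increasing the number of chain links. Your proposal instead sets $k$ equal to the number of $\textsc{Chain}_k$ parties, gives each party a ``private horizontal strip,'' and asserts a ``$+3$-per-active-chain-link'' contribution leading to $5k+O(1)$ vs.\ $2k+O(1)$. That is a genuinely different amplification scheme, and you have not shown it is realizable.

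Concretely, what is missing is exactly the part you flag as ``the delicate case analysis that makes the argument work'': no gadget/connector pair is described that yields $+3$ per active link while remaining a unit $l^\infty$ ball graph, and the feasibility of chaining arbitrarily many such links in the plane is unsupported. Your strip-per-party layout in fact discards the key geometric idea that makes the paper's bound go beyond the one-dimensional $5/3$: the perpendicular interleaving of rows and columns, where one party's gadget occupies the interstices of another's. Without that, a stack of independent horizontal gadgets joined by connectors looks much more like a chain of interval gadgets, and there is no evidence a constant larger than what a two- or three-party 1D reduction gives can be squeezed out. Note also a sign confusion in the sketch: you say the connectors ``overlap the extra square of the $\sigma_i$-gadget if and only if that bit is a $1$,'' which would penalize the YES case; in the paper's construction the connectors are positioned to \emph{miss} the $\sigma$-indexed balls and to hit every other index. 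Finally, the paper's approach avoids a subtlety your plan would have to confront: with $k(\epsilon)$ parties, one must verify that the $\Omega(n/k^2)$ communication bound and the $O(k\log n)$ index-forwarding overhead still net out to $\Omega(n)$ and that the number of vertices stays $\Theta(n)$ simultaneously across all $k$ strips; this bookkeeping is trivial in the paper's fixed-three-party setting but would need to be checked in yours.
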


\begin{proof}
We show the lower bound for $l_\infty$, which then implies the result for $l_1$. 
This proof works by reducing from the $3$-party chained index
problem.
As with the general result in Section~\ref{sec:general-explicit}, more than $2$ parties are necessary in order to give a bound for approximation factors greater than $2$.
    
    Suppose we have an instance of $\textsc{Chain}_3$ with $n$-bit vectors. We will describe a way for each party to construct a collection of unit $l_1$ balls based on the part of the input they hold, such that their union has small or large $\alpha(G)$ depending on the solution to the communication problem. Thus, a streaming algorithm for approximating $\alpha(G)$ can be used to solve the communication problem, giving a space bound.
    
For the lower construction we use domain size $M = 10n^3$ and ball radius $r = 2n^2$, which are small enough to allow succinct $\operatorname{polylog}n$ sized descriptions of the balls (as required for an implicit vertex stream), but large enough to create the gadgets we require.
Fix integer $k \in [n]$. The first party has the bit vector $X^{(1)}$.
For each entry with $X^{(1)}_i = 1$, add balls centered at $(i(4n+3)+(j+1)(4n^2+3n), 4n^2)$ and $(i(4n+3)+(j+1)(4n^2+3n), 8n^2)$ for each $j \in [k]$.
Essentially, this makes two horizontal lines of balls stacked on top
of each other.
There are potentially $nk$ ball locations along the line with centers $4n+3$ apart.
The first $n$ locations are associated with the $n$ entries of
$X^{(1)}$; we place a ball if $X^{(1)}=1$, and omit it otherwise; then
this is repeated $k$ times in succession. 
The two lines produce a collection of balls $G_1$ of size at most
$2nk$, with $\alpha(G_1) = 2k$.
Importantly, the collection of balls associated with any index $X^{(1)}_i = 1$ forms an MIS.
    
Party two will obliviously add their own set of balls, such that $\alpha(G)$ will increase exactly when the answer bit $X^{(1)}_{\sigma_1} = 1$. The second party has its bit vector $X^{(2)}$ and the index $\sigma_1$ of the answer bit in the first party's bit vector. For each entry with $X^{(2)}_i = 1$ add a ball centered at $(\sigma_1(4n+3)+(j+\frac{3}{2})(4n^2+3n), 6n^2-n+2i)$ for each $j \in [k]$.
Essentially, this produces $k$ columns of $n$ balls. The columns are
spaced with centers $n^2+3n$ apart, lined up to fit in the gaps
between the balls corresponding to bit $X^{(1)}_{\sigma_1}$ of the
first party (if those balls are present).
Each column is constructed as follows: place $n$ balls spaced with
centers $1$ apart, each associated with one of the entries of
$X^{(2)}$, but exclude each ball whose $X^{(2)}_i = 0$.
The result is a collection of balls $G_2$ of size $nk$ with
$\alpha(G_2) = k$.
Again, the collection of balls associated with any index $X^{(1)}_i = 1$ forms an MIS.
    
Now, party three will add the final collection of balls $G_3$. Party
three knows $\sigma_1$ (this can be appended to any message from player two) and $\sigma_2$, the index of the answer bit in
party two's bit vector.
This party adds balls centered at
$(\sigma_1(4n+3)+(j+\frac{3}{2})(4n^2+3n), 10n^2-n+2\sigma_2 + 1)$ and
$(\sigma_1(4n+3)+(j+\frac{3}{2})(4n^2+3n), 2n^2-n+2\sigma_2 - 1)$ for
each $j \in [k]$.
These balls sit at the top and bottom of each column from party $2$
sandwiching the ball corresponding to $X^{(2)}_{\sigma_2}$ (if
present).
This final collection has an independent set of size $2k$.
    
At this point we wish to determine $\alpha(G)$ for the union $G = G_1 \cup G_2 \cup G_3$. In the case that $X^{(1)}_{\sigma_1} = X^{(2)}_{\sigma_2} = 1$, we can take the MIS of $G_1$ associated with $X^{(1)}_{\sigma_1}$ and the MIS of $G_2$ associated with $X^{(2)}_{\sigma_2}$ along with all the balls in $G_3$ to form an MIS of $G$ of size $5k$.
However, if $X^{(1)}_{\sigma_1} = X^{(2)}_{\sigma_2} = 0$, then
choosing any ball from $G_1$ (other than the left-most
$\sigma_1-1$ balls in each row) excludes every ball in a column of
$G_2$ and a ball from $G_3$.
Similar exclusions occur between the other pairs of collections.
The result is that the best we can do is to choose an MIS from $G_1$ corresponding to an index smaller than $\sigma_1$ along with $1$ ball from each of $G_2$ and $G_3$ in the rightmost column, giving a total of $\alpha(G) = 2k+2$.
    
This shows that a streaming algorithm achieving an approximation
factor better than $\frac{5k}{2k+2}$ must use $\Omega(n)$ space.
This holds for any constant $k$ (just take $n$ large enough to allow that $k$), giving the result.
\end{proof}

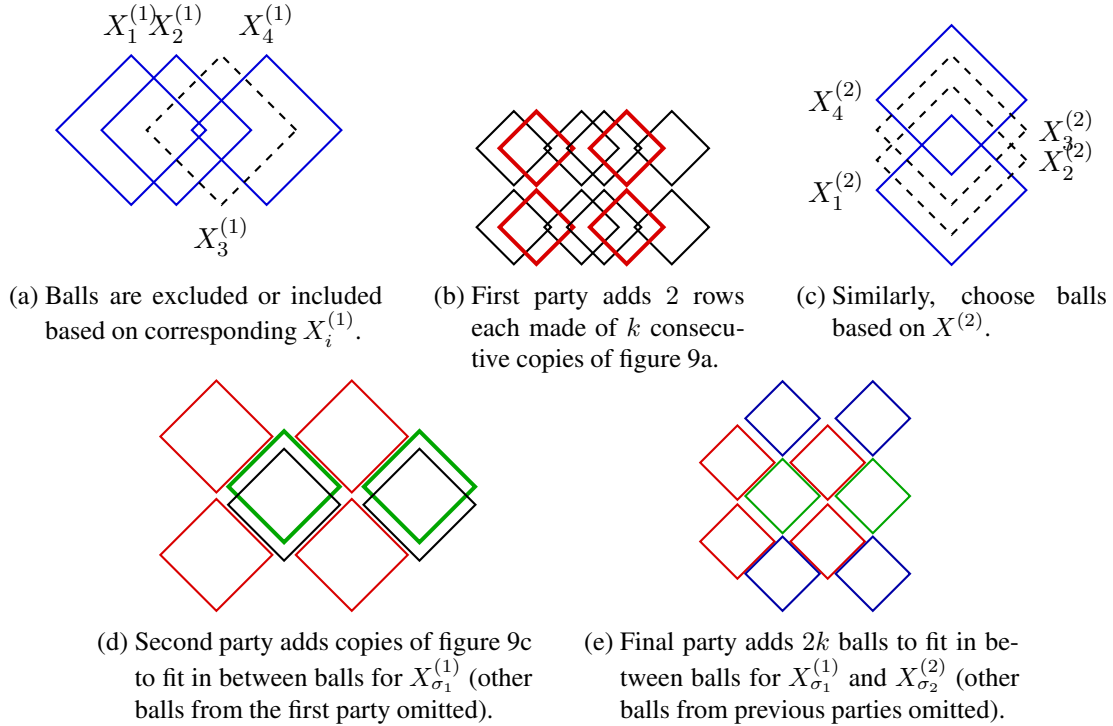
\begin{figure}[h]
    \centering
    \subcaptionbox{Balls are excluded or included based on corresponding $X^{(1)}_i$.\label{subfig-squares-gadget-one}}[0.3\textwidth]{
        \begin{tikzpicture}[thick,
          ball/.style={draw, diamond, inner sep=0pt, minimum height=2cm, minimum width=2cm}
        ]
            \foreach \x in {1,2,4}
            {
                \node[ball, blue!85!black, label=above:{$X^{(1)}_\x$}] (1-1-\x) at (0.6*\x, 0) {};
            }
            \foreach \x in {3}
            {
                \node[ball, dashed, label=below:{$X^{(1)}_\x$}] (1-1-\x) at (0.6*\x, 0) {};
            } 
        \end{tikzpicture}
    }
    \hspace{0.5cm}
    \subcaptionbox{First party adds $2$ rows each made of $k$ consecutive copies of figure~\ref{subfig-squares-gadget-one}.}[0.25\textwidth]{
        \begin{tikzpicture}[thick,
          ball/.style={draw, diamond, inner sep=0pt, minimum height=1cm, minimum width=1cm}
        ]
            \foreach \z in {1,2}
            \foreach \y in {1,2}
            {
                \foreach \x in {1,4}
                {
                    \node[ball] (\z-\y-\x) at (1.2*\y+0.3*\x, \z*1.05) {};
                }
                \foreach \x in {2}
                {
                    \node[ball, red!85!black, line width=1.5pt] (\z-\y-\x) at (1.2*\y+0.3*\x, \z*1.05) {};
                }
            }
        \end{tikzpicture}
    }
    \hspace{0.5cm}
    \subcaptionbox{Similarly, choose balls based on $X^{(2)}$.\label{subfig-squares-gadget-two}}[0.25\textwidth]{
        \begin{tikzpicture}[thick,
          ball/.style={draw, diamond, inner sep=0pt, minimum height=2cm, minimum width=2cm}
        ]
            \foreach \x in {1,4}
            {
                \node[ball, blue!85!black, label=left:{$X^{(2)}_\x$}] (1-1-\x) at (0, 0.4*\x) {};
            }
            \foreach \x in {2,3}
            {
                \node[ball, dashed, label=right:{$X^{(2)}_\x$}] (1-1-\x) at (0, 0.4*\x) {};
            } 
        \end{tikzpicture}
    }
    \subcaptionbox{Second party adds copies of figure~\ref{subfig-squares-gadget-two} to fit in between balls for $X^{(1)}_{\sigma_1}$ (other balls from the first party omitted).}[0.35\textwidth]{
        \begin{tikzpicture}[thick,
          ball/.style={draw, diamond, inner sep=0pt, minimum height=1.5cm, minimum width=1.5cm}
        ]
            \foreach \z in {1,2}
            \foreach \y in {1,2}
            {
                \node[ball, red!85!black] (\z-\y-3) at (1.8*\y, \z*1.575) {};
            }
            \foreach \y in {1,2}
            {
                \foreach \x in {4}
                {
                    \node[ball, green!65!black, line width=1.5pt] () at (0.9+1.8*\y, 2.1625+0.08*\x) {};
                }
                \foreach \x in {1}
                {
                    \node[ball] () at (0.9+1.8*\y, 2.1625+0.08*\x) {};
                }
            }
        \end{tikzpicture}
    }
    \hspace{0.5cm}
    \subcaptionbox{Final party adds $2k$ balls to fit in between balls for $X^{(1)}_{\sigma_1}$ and $X^{(2)}_{\sigma_2}$ (other balls from previous parties omitted).}[0.35\textwidth]{
        \begin{tikzpicture}[thick,
          ball/.style={draw, diamond, inner sep=0pt, minimum height=1cm, minimum width=1cm}
        ]
            \foreach \z in {1,2}
            \foreach \y in {1,2}
            {
                \node[ball, red!85!black] (\z-\y-3) at (1.2*\y, \z*1.05) {};
            }
            \foreach \y in {1,2}
            \foreach \x in {4}
            {
                \node[ball, green!65!black] () at (0.6+1.2*\y, 1.44166667+0.05333333333*\x) {};
                \node[ball, blue!65!black] () at (0.6+1.2*\y, 2.475+0.05333333333*\x) {};
                \node[ball, blue!65!black] () at (0.6+1.2*\y, 0.40833333333+0.05333333333*\x) {};
            }
        \end{tikzpicture}
    }
    \caption{Example for theorem~\ref{thm:unit-squares-lb} with $X^{(1)} = (1, 1, 0, 1)$, $X^{(2)} = (1, 0, 0, 1)$, $\sigma_1 = 2$, $\sigma_2 = 4$, and $k=2$.}
    \label{fig:unit-squares-lb}
\end{figure}

If we are allowed a combination of large and small balls, we can slightly improve the lower bound up to the maximum possible for a $3$-party construction.

\begin{theorem}
Achieving a $(3-\epsilon)$-approximation of $\alpha(G)$, with constant probability of success, on an $l_1$ or $l_\infty$ ball stream with $d=2$ and arbitrary $\Delta$ requires $\Omega(n)$ space for any $\epsilon > 0$.
\end{theorem}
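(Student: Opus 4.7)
We follow the template of Theorem~\ref{thm:unit-squares-lb}, again reducing from $\textsc{Chain}_3$ on $n$-bit vectors, parametrized by a ``block count'' $K$ that we will take to be a sufficiently large constant. The key new ingredient is that allowing balls of widely separated radii $R_1 \gg R_2 \gg R_3$ lets us flatten each party's contribution to exactly $K$ MIS balls in the good case, rather than the asymmetric $2K, K, 2K$ split used for unit balls. This changes the achievable good-vs-bad ratio from $5K/(2K+O(1))$ to $3K/(K + O(1))$, which tends to $3$ as $K$ grows.

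The plan is to fix three well-separated scales, say $R_1 = \Theta(n^4)$, $R_2 = \Theta(n^2)$, $R_3 = \Theta(1)$, embedded in $[M]^2$ with $M = \operatorname{poly}(n)$. Party~1 places $K$ far-apart \emph{superblocks}; within each, it lays down $n$ candidate positions for radius-$R_1$ balls packed densely enough that any two in the same superblock overlap, including position $i$ iff $X^{(1)}_i = 1$. Thus $\alpha(G_1) = K$. Party~2, knowing $\sigma_1$, embeds in each superblock a \emph{subblock} of $n$ candidate radius-$R_2$ balls adjacent to Party~1's position $\sigma_1$, arranged so that every such $R_2$ ball is disjoint from the $R_1$ ball at $\sigma_1$ (if present) but intersects every $R_1$ ball at any other position in that superblock; Party~2 includes candidate $i$ iff $X^{(2)}_i = 1$. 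Party~3 uses $\sigma_1$ and $\sigma_2$ to drop a single radius-$R_3$ ball in each superblock, adjacent to the Party~2 candidate at position $\sigma_2$, so that it is disjoint from the $R_2$ ball at $\sigma_2$ but intersects every other $R_2$ ball in its subblock.

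For the analysis, in the good case $X^{(1)}_{\sigma_1} = X^{(2)}_{\sigma_2} = 1$, each superblock simultaneously admits its three ``chosen'' balls, which are pairwise disjoint by construction, yielding an independent set of size $3K$. In the bad case $X^{(1)}_{\sigma_1} = X^{(2)}_{\sigma_2} = 0$, any Party~1 ball selected in a superblock sits at some $i \neq \sigma_1$ and hence blocks every Party~2 ball in that subblock; analogously any Party~2 ball sits at some $i \neq \sigma_2$ and blocks the Party~3 ball. So each superblock contributes at most one ball, giving $\alpha(G) \leq K + O(1)$ (the $O(1)$ absorbs a constant number of ``rightmost'' pickups as in Theorem~\ref{thm:unit-squares-lb}). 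A $(3-\epsilon)$-approximation distinguishes $3K$ from $K+O(1)$ once $K$ is a large enough constant, so by the $\Omega(n)$ lower bound for $\textsc{Chain}_3$ any such streaming algorithm needs $\Omega(n)$ space.

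The main technical obstacle is realizing the stated intersection pattern in the $l^\infty$ (equivalently $l^1$) metric. The three separated scales supply enough geometric slack that Party~2's subblock can nest inside the ``notch'' between Party~1 balls where the $R_1$ ball at position $\sigma_1$ would sit, and Party~3's single ball nests inside the Party~2 notch analogously. One must carefully pin down center coordinates so that a Party~2 ball is simultaneously disjoint from the Party~1 ball at $\sigma_1$ and from the Party~3 ball, while Party~2 balls at positions $\neq \sigma_2$ are hit by the Party~3 ball; the $R_i/R_{i+1} = \Theta(n^2)$ gap makes this accountable via a routine coordinate calculation. Finally, all positions and radii can be taken to be $\operatorname{poly}(n)$-bounded integers, satisfying the implicit vertex stream discretization.
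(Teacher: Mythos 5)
Your construction has a genuine geometric gap that makes the bad-case analysis fail, and the gap is not a technicality but a Helly-type obstruction inherent to $l_1/l_\infty$ balls.

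The critical step is the claim that Party~2 can place $R_2$ balls so that \emph{every} $R_2$ ball is disjoint from the $R_1$ ball at position $\sigma_1$ yet intersects every $R_1$ ball at every other position in the superblock, with the analogous claim for Party~3 versus Party~2. But Party~1 places its balls without knowing $\sigma_1$, so its arrangement of $n$ pairwise-overlapping $R_1$ balls must admit such a ``hole'' placement for \emph{every} possible $\sigma_1 \in [n]$. This is impossible for $n > 4$: axis-parallel boxes (and after a $45^\circ$ rotation, $l_1$ diamonds) have Helly number~$2$, so $n$ pairwise-intersecting $R_1$ balls share a common point, and the intersection pattern factorizes coordinate-wise into intervals. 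In $1$D, a small interval disjoint from $[a_\sigma,b_\sigma]$ but meeting all other $[a_i,b_i]$ must lie beyond $b_\sigma$ (so $b_i > b_\sigma$ for all $i\neq\sigma$, i.e.\ $\sigma$ has the strictly smallest right endpoint) or before $a_\sigma$ (largest left endpoint). In $2$D the small ball must be axis-separated from $B_\sigma$ in $x$ or in $y$, so at most four indices can ever play the role of $\sigma$. Equivalently, whenever $B_\sigma$ is not extremal, $\bigcap_{i\neq\sigma} B_i \subseteq B_\sigma$, and the ``notch'' you want Party~2 to nest into does not exist. The same obstruction recurs one level down between Parties~2 and~3. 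Your ``routine coordinate calculation'' cannot be completed because there is no valid set of coordinates.

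Note that Theorem~\ref{thm:unit-squares-lb} avoids exactly this trap: its Party~2 columns are \emph{not} required to hit every non-$\sigma_1$ Party~1 ball; each column hits only a sliding window of positions (those to the right of $\sigma_1$ in repetition $j$ and to the left in repetition $j+1$), and the $2k+2$ bad-case bound is extracted from the left-to-right chain structure, not from a complete-blocker property. The paper's proof of the $(3-\epsilon)$ theorem stays inside that framework: Party~1 supplies $k$ stacked rows instead of $2$, Party~2 interleaves column gadgets between every consecutive pair of rows, and Party~3 places strictly smaller balls between Party~2's columns so that they miss the $\sigma_2$ column but hit its neighbours. This is where the arbitrary $\Delta$ is used --- only Party~3 shrinks --- and the resulting $3k^2$ versus $k^2+k+1$ gap gives the $3-\epsilon$ ratio as $k\to\infty$. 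If you want to salvage a nested, per-superblock approach you would need Party~1's balls to have a non-Helly intersection pattern, which $l_1/l_\infty$ balls in the plane simply do not support; sticking with the chain construction and increasing the row count is the route that actually works.
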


\begin{proof}
We adapt the construction from Theorem~\ref{thm:unit-squares-lb} as follows: the first party inserts $k$ rows stacked on top of each other, rather than $2$, the second party inserts copies of its columns between every consecutive pair of rows from the first party, and the third party places smaller balls in between the columns of party $2$ such that they are independent of the squares corresponding to the answer bit but overlap the squares either side.
    
This results in a gap of $k^2+k+1$ to $3k^2$ for the two cases, giving the result.
\end{proof}

\section{Conclusion}

We have addressed the complexity of Maximal and Maximum Independent Set (and
various relaxations and related problems) under three natural models of graph streams:
edge-arrival, explicit vertex arrival, and implict vertex arrival.
The central problems of maximal and maximum independent set separate
the models: maximal independent set has high space complexity in
the edge-arrival model but an easy greedy algorithm in the vertex
models; however, MIS has high space complexity for explicit vertex
arrival (and hence also the harder edge arrival model), while there
are small-space constant factor approximations for some cases of implicit
vertex arrival. 

A natural extension is to consider {\em weighted} versions of the MIS
problem, where each vertex is assigned a weight and the aim is to
maximize the weight of the independent set.
Our upper bounds generalize naturally to this case.
Consider, for example, the two-dimensional geometric case ($l_1$ or
$l_\infty$).
A naive first approach is to round all weights to the nearest power of
epsilon, and solve each weight class separately, before combining the
results.
An improved result follows by observing that for each ``strip'' in the
decomposition of Theorem~\ref{thm:unit-squares-lb}, we only need to
track at most $\tilde{O}(\epsilon^{-1})$ weight classes: if a given strip
contains a ball of weight $w$ then we can ignore any intervals of
weight less than $\epsilon w$ without affecting the quality of the
solution by more than a $(1+\epsilon)$ factor.
Thus we obtain a $(3+\epsilon)$-approximate weighted MIS for the unit
square case using only $\tilde{O}(\log W +
\frac{\alpha(G)}{\epsilon})$. 
It remains to consider the complexity of other weighted variants of
MIS problems. 

There are a number of other natual open questions that follow from our
study: 
\begin{itemize}
    \item Is there a multi-pass lower bound for \emph{maximal} independent set in edge streams?
    \item Are there $o(\alpha(G))$ space algorithms for achieving constant factor approximations to $\alpha(G)$ for classes of geometric intersection graphs given as \emph{explicit} vertex streams?
    \item Can we close the gap between the upper and lower bounds for approximating MIS in a unit square stream?
    \item Is there a constant factor approximation algorithm for MIS on streams of arbitrary sized squares?
\end{itemize}

\printbibliography

\end{document}